\newcites{appendix}{References}
\theoremstyle{definition}
\theoremstyle{plain}
\newtheorem{theorem}{Theorem}
\newtheorem{assumption}{Assumption}
\newtheorem{lemma}{Lemma} 
\theoremstyle{remark}
\newtheorem*{remark}{Remark}
\DeclareMathOperator*{\argmax}{arg\,max}
\newcommand{\stdt}{{\sqrt{\operatorname{Var} \tau}}}
\newcommand{\Var}{{\operatorname{Var}\, }}
\newcommand{\alt}{{\textup{alt}}}
\newcommand{\E}{{\mathbb{E}}}
\newcommand{\diff}{{\textup{diff}}}
\title{\normalsize Estimating the prevalance of indirect effects and other spillovers}
\author{\normalsize David Choi}
\date{\normalsize \today}
\begin{document}
\maketitle

\abstract{
In settings where interference between units is possible, we define the prevalence of indirect effects to be the number of units who are affected by the treatment of others. This quantity does not fully identify an indirect effect, but may be used to show whether such effects are widely prevalent. Given a randomized experiment with binary-valued outcomes, methods are presented for conservative point estimation and one-sided interval estimation. No assumptions beyond randomization of treatment are required, allowing for usage in settings where models or assumptions on interference might be questionable. To show asymptotic coverage of our intervals in settings not covered by existing results, we provide a central limit theorem that combines local dependence and sampling without replacement. Consistency and minimax properties of the point estimator are shown as well. The approach is demonstrated on an experiment in which students were treated for a highly transmissible parasitic infection, for which we find that a significant fraction of students were affected by the treatment of schools other than their own. 
 }


\section{Introduction}

In a study where units may be affected by others (``interference between units''), it may be of interest to assign a measure of strength to the underlying interference. One possible approach is to estimate the indirect treatment effect -- i.e, the effect of treating one unit on the outcome of others \citep{tchetgen2012causal, li2022network}. However, confidence intervals for such effects require assumptions or bounds on interference (that is, on the extent to which units can affect each other), which may be unsuitable for poorly understood settings. If no such assumptions can be made, a very incomplete answer can be found by testing the null hypothesis of no interference \citep{athey2017exact, pouget2019testing}. However, rejection of the null only implies that at least one unit is affected by the treatment of others, leaving open the possibility that nearly all units are affected only by their own treatment.

In this paper, we show that it is possible to invert a test of no interference to find a one-sided confidence interval for the number of units who are affected by the treatment of others. This does not identify a indirect effect; for example, it does not answer whether such effects are positive or negative. However, it can be used to show that indirect effects are widely prevalent, without reliance on bounds or assumptions on interference. We also propose a conservative point estimator that is consistent and minimax for a lower bound on the quantity of interest. The approach requires a randomized experiment with known design and binary-valued outcomes. No formal assumptions on interference are required. As a result, our estimates may safely use crude network proxies such as geographic proximity. By ``safely'', we mean that failure of the elicited network to accurately reflect the true generative mechanism may result in reduced power or increased uncertainty, but not loss of coverage or anti-conservative bias.

Existing central limit theorems will not generally apply to our test statistic -- due to its networked nature -- except in settings where treatments are assigned independently for each unit. To extend the range of experiment designs for which asymptotic normality can be shown, we provide a new central limit theorem that considers a sum of $N$ variables whose dependency graph would be sparse if treatments were independently randomized, but now exhibit global dependence due to the assignment of treatment by sampling without replacement. 





We demonstrate the proposed approach on an experiment described in \cite{miguel2004worms}, in which free treatment of intestinal parasites was provided to a quasi-random selection of schools in a region where infection rates were high. Treated students were susceptible to reinfection by untreated peers, resulting in interference between units. Our point estimate indicates that for at least 9.6\% of the students in the experiment (one-sided 95\% CI: at least 0.3\%), whether or not they experienced any parasitic infections in the following year depended not only on whether their own school received the deworming treatment, but on the treatment of other schools as well. 

\section{Related Literature}

Indirect effects are discussed for vaccination in \cite{tchetgen2012causal} assuming non-interfering neighborhoods, and for spatial-temporal settings in \cite{wang2021causal} assuming bounds on interference, though not on its exact structure. In both papers, indirect effects are localized to the effect on one's neighboring units, while \cite{hu2021average} and \cite{li2022random} consider a more global notion whose inference requires stronger assumptions; for example, interval estimation of the indirect effect of \cite{li2022random} requires anonymous interference under an observed and correctly specified network generated by a low rank graphon. In all of these papers, the effects are defined in expectation over the random treatment assignment, analogous to the EATE (or direct effect) discussed in \cite{savje2017average}.

Testing for the null hypothesis of no interference is studied in \cite{athey2017exact}, \cite{aronow2012general}, \cite{pouget2019testing}, \cite{basse2019randomization}, and \cite{han2022detecting}. \cite{pouget2019testing} considers testing for cluster randomized experiments. \cite{han2022detecting} considers sequential settings in which treatment is gradually released to increasing numbers of units. Other recent works on interference include cluster randomized experiments \citep{leung2022rate, park2023assumption}; peer effects \citep{li2019randomization,basse2019randomization_peer}; observational studies \citep{ogburn2022causal,forastiere2020identification}; and others \citep{li2022network,tchetgen2021auto,cortez2022staggered}. 

The recent paper \citep{choi2021randomization} considered questions of the form ``relatively speaking, were the units that received treatment affected more positively than those that received the control?'' and also gave lower bounds on the number of units that were affected by any treatment (including their own). We will use a convex relaxation from this paper; however, we also go beyond their work in significant ways. In contrast to \cite{choi2021randomization}, we bound the number of units that were affected by the treatment of others, as opposed to their own treatment. To do so, we compare the observed outcomes to a new counterfactual that we term {\it isolated treatment}, to which their approach does not apply. In particular, inference for our estimands requires a more indirect approach in which we invert a test statistic, as opposed to their work in which the error of the point estimator can be bounded directly. Additionally, we note that our central limit theorem fills a gap in \cite{choi2021randomization} -- and also in other earlier works that assumed independent Bernoulli treatment randomization, such as \cite{aronow2017estimating} --  by applying to experiment designs not covered by the dependency graph CLT used in these papers.


\section{Problem Formulation}

\subsection{Preliminaries} \label{sec: preliminaries}

Given a randomized experiment on $N$ units, let $X = (X_1,\ldots,X_N)$ and $Y = (Y_1,\ldots,Y_N)$ denote the binary-valued treatments and outcomes of each unit. We assume that the experiment design and hence the distribution of $X$ is known. As as we place no restrictions on interference, each unit's outcome may potentially depend on all $N$ treatments. As a result, for each unit $i \in [N]$ it holds that
\begin{align} \label{eq: f}
Y_i & = f_i(X_1,\ldots,X_N) 
\end{align}
for some set of potential outcome mappings $\{f_i\}_{i=1}^N$. 


Given an observed network for the $N$ units, let $G \in \{0,1\}^{N \times N}$ denote its adjacency matrix, which has no self-connections so that $G_{ii}=0$ for all $i \in [N]$. Given $G$, let $W = (W_1,\ldots,W_N) \in \{0,1\}^N$ denote a binary summary statistic of the treatment received by each unit's neighbors. For example, $W_i$ could indicate whether unit $i$'s number of treated neighbors exceeds a threshold, so that $W_i$ is given by
\[ W_i = \begin{cases} 1 & \text{ if }\sum_{j:G_{ij}=1} X_j \geq \gamma_i \\ 0 & \text{ otherwise} \end{cases}\]
for some vector of unit-specific thresholds $\gamma = (\gamma_1,\ldots,\gamma_N)$. We will refer to $X_i$ and $W_i$ respectively as the direct and indirect treatment of unit $i$.

\subsection{Estimand}

Let $\theta^* = (\theta_1^*,\ldots,\theta_N^*)$ equal the outcome of each unit under a counterfactual of isolated treatment -- that is, let $\theta_i^*$ equal the outcome of unit $i$ if they receive their assigned treatment $X_i$, while all other units receive the control:
\begin{align*}
\theta_i^* & = f_i(0,\ldots,0,\underbrace{X_i}_{\textup{$i$th position}},0,\ldots,0).
\end{align*}
Our estimand $\psi$ will be the absolute difference between $Y$ and $\theta^*$,
\[ \psi = \sum_{i=1}^N |Y_i - \theta_i^*|. \] 
For binary outcomes, this quantity can be interpreted as the number of units who were affected by the treatment of another unit, in the sense that their outcome differed from what would have occurred had they received their treatment in isolation. Alternately, it may be viewed as a lower bound on the number of units whose outcome mapping $f_i$ is non-constant in the treatment of others.

Our estimand $\psi$ may be generalized, by letting the counterfactual vector $\theta^*$ be any binary vector satisfying
\[ \theta_i^* = \tilde{f}_i(X_i), \qquad \forall\ i \in [N], \]
for some choice of functions $\{\tilde{f}_i\}_{i=1}^N$. For example, suppose that units are divided into groups (such as schools or villages) and treatment is randomized at the group level, so that units in the same group receive the same treatment. In such cases we may define $\tilde{f}_i$ to equal the outcome of unit $i$ if their group receives treatment $X_i$, while all other groups receive the control. Then the condition $Y_i \neq \theta_i$ implies that unit $i$'s outcome differed from what would have occurred had all other groups received the control, and the estimand $\psi$ may be interpreted as the number of units who were affected by the treatment of other groups (or as a lower bound on the number of units whose outcome mapping $f_i$ is non-constant in the treatment of other groups).

\subsection{Interval Estimation}

To find a one-sided interval for $\psi$, our approach will be to define a test statistic $\tau(X, \theta^*)$ that can be used to test and reject hypotheses for $\theta^*$, the unknown vector of of counterfactual outcomes. In principle, a confidence set for $\theta^*$ is given by the set of all non-rejected hypotheses, and the image of $\psi$ over this set induces a confidence set for $\psi$. 

\subsubsection{Test Statistic}

To test a hypothesis for $\theta^*$, our test statistic $\tau$ will compare the values of the hypothetical $\theta^*$ for units with low and high levels of indirect treatment, as measured by the indicator $W$. 
For example, we could let $\tau(X,\theta^*)$ equal the inverse propensity weighted comparison
\begin{equation} \label{eq: tau}
\tau(X, \theta^*) = \sum_{i=1}^N \left( \frac{W_i \theta_i^*}{P(W_i=1|X_i)} - \frac{(1-W_i) \theta_i^*}{P(W_i=0|X_i)}\right),
\end{equation}
where $P(W_i=1|X_i)$ conditions on the observed value of $X_i$. It can be seen that $\tau(X, \theta^*) = u^T\theta^*$ for $u \in \mathbb{R}^N$ given by 
\begin{align} \label{eq: u}
 u_i & = \frac{W_i}{P(W_i=1|X_i)} - \frac{1-W_i}{P(W_i=0|X_i)} & i \in [N],
\end{align}
and that $\E[u_i|X_i] = 0$ for all $i\in [N]$, implying that $\E\tau = 0$. 

Let $V(X,\theta^*)$ denote an estimator of $\Var \tau$ given by
\begin{align} \label{eq: V}
V(X, \theta^*) = \sum_{i=1}^N \sum_{j=1}^N \theta_i^* \theta_j^* \E[u_iu_j| X_i,X_j].
\end{align}
As the experimental design and the functional form of $u$ is known, to evaluate \eqref{eq: V} we can compute or simulate the conditional expectations $\E[u_iu_j | X_i, X_j]$ for $i,j \in [N]$.

\subsubsection{Interval Construction} Under certain conditions (see Section \ref{sec: theory}), it can be shown that $\tau / \sqrt{\Var \tau}$ is asymptotically normal and that $V = (\Var \tau)(1+o_P(1))$, in which case it holds that
\[ \frac{\tau(X, \theta^*)}{\sqrt[\uproot{5}+]{V(X, \theta^*)}} \rightarrow_d N(0,1),\]
where $\sqrt[\uproot{5}+]{\phantom{.}}$ denotes the function 
\begin{align} \label{eq: sqrt pos}
 \sqrt[\uproot{5}+]{x} &= \begin{cases} \sqrt{x} & \textup{ if } x \geq 0 \\ -\infty & \textup{ otherwise, }\end{cases}
 \end{align}
which extends square root to $\mathbb{R}$ while preserving concavity.

In such cases, with probability converging to $(1 - 2\alpha)$ it holds that
\[  |\tau(X,\theta^*)| \leq z_{1-\alpha} \sqrt[\uproot{5}+]{V(X, \theta^*)} \]
where $z_{1-\alpha}$ denotes the $(1-\alpha)$ quantile of a standard normal. It follows that an asymptotic $(1-2\alpha)$ level confidence set $S$ for the unknown $\theta^*$ is given by 
\begin{align} \label{eq: S}
S = \bigg\{\theta \in \{0,1\}^{N}: |\tau(X,\theta)| \leq z_{1-\alpha} \sqrt[\uproot{5}+]{V(X,\theta)}\bigg\},
\end{align}
and hence an asymptotic $(1-2\alpha)$ level confidence lower bound (or one-sided interval) on $\psi$ is given by its smallest value over all $\theta \in S$:
\begin{equation}\label{eq: lower bound}
 \min_{\theta \in \{0,1\}^{N}} \sum_{i=1}^N |Y_i - \theta_i|\quad  \textup{ subject to } \quad |\tau(X,\theta)| \leq z_{1-\alpha} \sqrt[\uproot{5}+]{V(X,\theta)}
\end{equation}

\subsubsection{Computation}
To compute the one-sided interval \eqref{eq: lower bound}, we observe that $V(X, \theta) = \theta^T Q\theta$ for $Q \in \mathbb{R}^{N \times N}$ given by
\begin{align*}
Q_{ij} & = \E[u_i u_j | X_i, X_j]
\end{align*}
and hence we may substitute $\tau = u^T\theta$ and $V = \theta^T Q \theta$ into \eqref{eq: lower bound} to yield
\begin{equation}\label{eq: optimization}
 \min_{\theta \in \{0,1\}^{N}} \sum_{i=1}^N |Y_i - \theta_i|\quad  \textup{ subject to } \quad |u^T\theta| \leq z_{1-\alpha}\sqrt[\uproot{5}+]{\theta^T Q \theta}
\end{equation}
Note that \eqref{eq: optimization} must be globally solved (or lower bounded) in order for our interval to have provable coverage guarantees. 

To proceed, we apply a convex relaxation that can be found in \cite{choi2021randomization}. First, we find a diagonal matrix $D$ such that $Q - D$ is negative semidefinite. This can be done by solving the semidefinite program
\begin{align} \label{eq: sdp}
    \min_{D \in \mathbb{R}^{N\times N}} \sum_{i=1}^N D_{ii} \quad \textup{subject to } D \textup{ diagonal, $D \geq 0,$ and } Q-D \preceq 0 
\end{align}
where $\geq 0$ denotes nonnegativity and $\preceq 0$ denotes negative semidefinite. Using the fact that $x = x^2$ for $x\in \{0,1\}$ and hence that
\[ \theta^TQ \theta = \theta^T(Q - D)\theta + \sum_{i=1}^N D_{ii}\theta_i \quad \text{ for all } \theta \in \{0,1\}^N, \]
we can rewrite \eqref{eq: optimization} equivalently as
\begin{equation}\label{eq: optimization 2}
 \min_{\theta \in \{0,1\}^{N}} \sum_{i=1}^N |Y_i - \theta_i|\quad  \textup{ subject to } \quad |u^T\theta| \leq z_{1-\alpha} \sqrt[\uproot{15}+]{\theta^T (Q-D) \theta + \sum_i D_{ii}\theta_i}
\end{equation}
Since $Q - D \preceq 0$, the constraint 
\[|u^T\theta| \leq z_{1-\alpha} \sqrt[\uproot{15}+]{\theta^T (Q-D) \theta + \sum_i D_{ii}\theta_i}\]
is convex is $\theta$. As a result, relaxing the binary constraint of \eqref{eq: optimization 2} to $\theta \in [0,1]^{N}$ results in a convex optimization problem whose global optimum can be efficiently found. Doing so returns a lower bound on \eqref{eq: optimization 2}, or equivalently a conservative one-sided prediction interval for $\psi$.

\subsection{Conservative Point Estimate} 
Given a test statistic whose mean is zero and which takes form $u^T\theta^*$ for some random vector $u \in \mathbb{R}^N$, let $L$ denote the quantity
\begin{align} \label{eq: L}
L= \frac{|u^T(Y - \theta^*)|}{\|u\|_\infty} 
\end{align}
By H\"older's inequality, it follows that $\psi$ is lower bounded by $L$
 \begin{align} \label{eq: holder}
\sum_{i=1}^N |Y_i - \theta_i^*| \geq \frac{|u^T(Y - \theta^*)|}{\|u\|_\infty}  
\end{align}
We propose $\hat{L}$ as a point estimate of this lower bound:
\begin{align} \label{eq: hat L}
 \hat{L}  = \frac{|u^T(Y - 0.5)|}{ \|u\|_\infty} 
\end{align}
In Section \ref{sec: theory}, conditions are given under which $\hat{L}$ is consistent and minimax for the lower bound $L$.

\begin{remark} 
It can be seen that the lower bound \eqref{eq: holder} can be particularly loose if $u$ has a small number of entries with substantially larger magnitude than the others. For $\tau$ given by \eqref{eq: tau}, this can occur if $P(W_i=1|X_i)$ or $P(W_i=0|X_i)$ is near zero for a small set of units. In such settings, for point estimation it may be preferable to use the statistic $\tau^{\alt}$ given by
\begin{align} \label{eq: tau alt}
\tau^{\alt}(X,\theta^*) = \sum_{i:W_i=1} \pi_{i1} \theta_i^* - \sum_{i:W_i=0}\pi_{i0} \theta_i^*,
\end{align}
where the weights $\pi_{i1}$ and $\pi_{i0}$ are given by
\begin{align*}
\pi_{i1} &= \min\left(\frac{P(W_i=0|X_i)}{P(W_i=1|X_i)}, 1\right) &
\pi_{i0} &= \min\left(\frac{P(W_i=1|X_i)}{P(W_i=0|X_i)}, 1\right)
\end{align*}
It can be seen that $\tau^{\alt} = u^T \theta^*$ for $u$ given by
\begin{align}\label{eq: u alt}
 u_i = W_i \pi_{i1} - (1-W_i)\pi_{i0}, \qquad i \in [N].
\end{align}
When the probability of indirect treatment is particularly unbalanced for a small set of units, it can be seen that $u$ given by \eqref{eq: u alt} may weigh units more evenly than \eqref{eq: u}. As a result, the lower bound $L$ may be larger (and hence tighter) if $\tau^{\alt}$ rather than $\tau$ is used for point estimation. (Such decisions should be made before the outcomes are examined, to avoid specification search.)
\end{remark}

\begin{remark}
 It can be seen from \eqref{eq: holder} that the lower bound $L$ will be exactly tight for $\psi$ if the following conditions both hold:
\begin{enumerate}
	\item $u_i(Y_i - \theta_i) \geq 0$ for all $i \in [N]$ -- that is, when $Y_i - \theta_i \neq 0$, its sign is predicted by $u_i$.
	\item $|u_i|$ is constant over $i \in [N]$
\end{enumerate}
As the conditions listed above may be difficult to exactly achieve, we generally expect $L$ to be a conservative proxy for $\psi$, as might be expected in the absence of a structural model or other assumptions on interference. Nonetheless, the conditions may give intuition as to the ideal experiment design. If $\tau$ given by \eqref{eq: tau} or $\tau^{\alt}$ given by \eqref{eq: tau alt} is used as the test statistic, the first condition suggests that the indirect treatment $W$ should be highly aligned with $Y - \theta$, as might occur when a strong causal relationship exists between the two. The second condition suggests that $P(W_i=1|X_i)$ should be close to $0.5$ for all units, so that each observation has equal leverage. 

\end{remark}

\section{Asymptotics} \label{sec: theory}

We give results on asymptotic normality of $\tau$, coverage of interval estimates, and consistent minimax point estimation of the lower bound $L$.

\subsection{Assumptions}

Assumption \ref{as: random sampling} describes a simple choice of experiment design in which the treatments are assigned by sampling without replacement.

\begin{assumption} \label{as: random sampling}
The unit treatments $X_1,\ldots,X_N$ are generated by sampling $T$ units without replacement, and there exists $\rho \in (0,1)$ such that $\rho \leq T/N  \leq 1-\rho$.
\end{assumption}

Assumption \ref{as: degree} restricts the adjacency matrix $G$ to have bounded degree. Note that this assumption does not limit interference between units, as it places no restrictions on the outcome mappings $\{f_i\}_{i=1}^N$.

\begin{assumption} \label{as: degree}
The adjacency matrix $G$ has in-degree and out-degree bounded by $d_{\max}$.
\end{assumption}

Assumption \ref{as: tau} places conditions on the statistic $\tau$ and on the counterfactual of interest $\theta^*$. 

\begin{assumption} \label{as: tau}
The statistic $\tau$ can be written as
$\tau(X, \theta^*) = u^T\theta^*$ where $u$  and $\theta^*$ are functions of $X$ that satisfy for some $C > 0$ that
\begin{enumerate}[label=\text{(A\arabic*)}]
    \item $\mathbb{E}[u_i|X_i = x] = 0$ for all $i \in [N]$ and $x \in \{0,1\}$ \label{item: zero mean}
    \item $|u_i(X)| \leq C$ for all $i \in [N]$ and $X \in \{0,1\}^N$ \label{item: u bounded}
\item For all $i \in [N]$ and $X, X' \in \{0,1\}^N$ it holds that \label{item: local}
\begin{align*}
u_i(X) & = u_i(X') & \textup{if } X_j=X_j' \textup{ for all } j \in \{i\} \cup \{k: G_{ik} = 1\}
\end{align*}
so that $u_i(X)$ depends only on $X_i$ and the entries of $X$ corresponding to $i$'s neighbors in $G$
\item There exist mappings $\tilde{f}_i: \{0,1\}\mapsto \{0,1\}$ such that for all $i \in [N]$ it holds that \label{item: theta}
\[ \theta_i^* = \tilde{f}_i(X_i) \]
\end{enumerate}
\end{assumption}

This assumption requires that the statistic $\tau$ is of the form $u^T\theta^*$, in which the entries of $u$ are required by conditions \ref{item: zero mean}-\ref{item: local} to satisfy a zero-mean condition, to be bounded, and to depend on subvectors of $X$ given by the neighborhood struture of the network $G$; and condition \ref{item: theta} requires each entry of $\theta^*$ to depend only on the corresponding unit's own treatment. We note that \ref{item: zero mean} and \ref{item: local} hold always for $\tau$ and $\tau^{\alt}$ given by \eqref{eq: tau} and \eqref{eq: tau alt}, while \ref{item: u bounded} holds provided that $P(W_i=1|X_i=x)$ is bounded away from 0 and 1 for all $i \in [N]$ and $x \in \{0,1\}$.

While Assumption \ref{as: tau} is required for our results on consistency and interval coverage, it goes beyond the requirements for asymptotic normality. For this reason, our central limit theorem will require a weaker and more generic condition, which we describe below.

\begin{assumption} \label{as: tau clt}
The statistic $\tau$ can be written as $\tau = \sum_{i=1}^N v_i(X)$ where the functions $v_i$ for $i \in [N]$ are bounded in magnitude by
\begin{align} \label{eq: v bounded}
 |v_i(X)| & \leq C, & \forall \ X \in \{0,1\}^N
 \end{align}
 for some $C>0$, and satisfy for all $X$ and $X'$ in $\{0,1\}^N$ that 
\begin{align} \label{eq: v local}
v_i(X) & = v_i(X') & \textup{if } X_j=X_j' \textup{ for all } j \in \{i\} \cup \{k: G_{ik} = 1\}
\end{align}
so that $v_i(X)$ depends only on $X_i$ and the entries of $X$ corresponding to $i$'s neighbors in $G$. 
\end{assumption}

Assumption \ref{as: tau clt} requires only that $\tau$ is a sum of terms that are bounded and depend on subvectors of $X$ given by the neighborhood structure of $G$. In particular, $\tau$ need not be of the form $u^T\theta^*$. It can be seen that Assumption \ref{as: tau} implies Assumption \ref{as: tau clt} for  $v_i(X) = u_i \theta_i^*$.

\subsection{Asymptotic Normality} Theorem \ref{th: clt} gives a central limit theorem under Assumptions \ref{as: random sampling}, \ref{as: degree}, and \ref{as: tau clt}. Due to the global dependence that is induced by sampling without replacement, the theorem is not a straightforward application of local dependence results such as \cite{rinott1994normal}, \cite[Th. 2.7]{chen2004normal}, \cite{ogburn2022causal}, or \cite{chin2018central} (which itself uses \cite{chatterjee2008new}), and may be of independent interest. The theorem is proven in Appendix \ref{sec: CLT proof}. 

\begin{theorem} \label{th: clt}
Assume a sequence of experiments for which Assumptions  \ref{as: random sampling}, \ref{as: degree}, and \ref{as: tau clt} hold, in which $\rho$, $d_{\max}$, and $C$ are constant as $N \rightarrow \infty$, and $\operatorname{Var} (\tau) \geq N^{1/2+\delta}$ for some $\delta > 0$. Then  $(\tau - \mathbb{E}\tau)/\sqrt{\operatorname{Var} \tau} \rightarrow_d N(0,1)$.
\end{theorem}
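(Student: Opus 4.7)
I would prove Theorem~\ref{th: clt} by coupling the sampling-without-replacement (SRS) design with an iid Bernoulli design and then transferring the CLT by a conditional argument. Specifically, let $\tilde X_1,\ldots,\tilde X_N$ be iid Bernoulli$(p)$ with $p=T/N$. Then the law of $X$ under SRS coincides with the conditional law of $\tilde X$ given $S=\sum_i\tilde X_i=T$. Under the Bernoulli law, Assumptions~\ref{as: degree} and~\ref{as: tau clt} place $\tilde\tau=\sum_i v_i(\tilde X)$ inside the scope of classical dependency-graph CLTs such as \cite[Th. 2.7]{chen2004normal}: the summands form a sum whose dependency graph has degree bounded by a constant times $d_{\max}$. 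The global dependence induced by $\sum_iX_i=T$, which breaks this sparsity under SRS, will be handled by the conditioning step rather than by the dependency-graph tools directly.

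\textbf{Joint CLT for $(\tilde\tau,S)$.} The vector $(\tilde\tau,S)=\sum_i(v_i(\tilde X),\tilde X_i)$ is a sum of bounded vector summands whose dependency graph under the Bernoulli law is still bounded-degree, so a multivariate local-dependence CLT gives that $(\tilde\tau-\mathbb{E}\tilde\tau,\,S-Np)/\sqrt{N}$ converges to a bivariate Gaussian with an explicit $2\times 2$ covariance matrix built from local sums indexed by neighborhoods of $G$.

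\textbf{Transfer to SRS by conditioning.} Let $\beta=\operatorname{Cov}(\tilde\tau,S)/\operatorname{Var}(S)$ and write the Hajek-style decomposition $\tilde\tau=\mathbb{E}\tilde\tau+\beta(S-Np)+\tilde\tau_{\perp}$, where $\tilde\tau_\perp$ is uncorrelated with $S$ by construction. Since $\tau\stackrel{d}{=}(\tilde\tau\mid S=T)$ and $T=Np$, the SRS distribution of $\tau-\mathbb{E}\tau$ matches the conditional distribution of $\tilde\tau_\perp$ given $S=T$. A joint CLT alone does not suffice, since $\{S=T\}$ is a measure-zero event in the limit; but combining the joint CLT with a local limit theorem for the binomial $S$ (which is classical) converts it into a conditional CLT of the form $(\tilde\tau_\perp/\sqrt{\operatorname{Var}\tilde\tau_\perp}\mid S=T)\rightarrow_d N(0,1)$. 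Finally, $\operatorname{Var}\tilde\tau_\perp=\operatorname{Var}\tilde\tau-\beta^2\operatorname{Var}(S)$ equals $\operatorname{Var}\tau$ up to lower-order corrections, so the normalization matches.

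\textbf{Main obstacle.} The hard part is the quantitative conditioning step: conditioning on $\{S=T\}$ is a discontinuous operation on the joint distribution, so one cannot simply read a conditional CLT off weak convergence. I would handle this through a Fourier argument, controlling the joint characteristic function $\mathbb{E}\exp(it\tilde\tau_\perp/\sqrt{\operatorname{Var}\tau}+isS/\sqrt{\operatorname{Var}S})$ on a suitable domain and exploiting local dependence to extract Gaussian decay, then inverting to recover the conditional law. The variance lower bound $\operatorname{Var}\tau\geq N^{1/2+\delta}$ plays a crucial role here, ensuring that the spread of $\tilde\tau_\perp$ is large enough relative to the lattice spacing of the integer-valued $S$ for the local-CLT inversion to succeed. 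Producing characteristic-function estimates under local dependence that are sharp enough for this quantitative inversion — sharper than what the standard dependency-graph CLTs give out of the box — is what I expect to occupy the bulk of the appendix proof.
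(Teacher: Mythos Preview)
Your route is genuinely different from the paper's. The paper does not couple to an iid Bernoulli design at all; instead it works directly under sampling without replacement by realizing the draws sequentially as $J_1,\ldots,J_T$, forming the Doob martingale $M_{Nt}=\E[\tau\mid J_{1:t}]/\stdt$, and applying a martingale CLT. The work then reduces to two lemmas: bounding the martingale increments $|Z_{Nt}|\leq C_1/\stdt$ (a counting argument using Lemma~\ref{le: basic prob}), and establishing a bounded-difference property for the random sum $\sum_t\sigma_{Nt}^2$ so that Azuma--Hoeffding forces it to concentrate at its mean $1$. The latter is the crux and is handled by an explicit swap coupling $J\mapsto J^{(\ell)}$ together with a careful case analysis (Lemma~\ref{le: diff}). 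Compared with your plan, the martingale route keeps everything inside the SRS measure and never needs a conditioning-on-a-null-event step; your plan trades that for reusing dependency-graph CLTs but then has to reintroduce the SRS constraint through a quantitative conditional/local CLT.

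That last step is where I think your proposal is thin. A joint dependency-graph CLT for $(\tilde\tau,S)$ gives only distributional convergence, and the inversion you describe needs uniform-in-$s$ control of the conditional characteristic function $\E[e^{it\tilde\tau_\perp/\sqrt{\operatorname{Var}\tau}}\mid S=T]$, which in turn requires integrable decay of the \emph{joint} characteristic function along the $S$-direction well beyond the $O(1)$ window that a CLT controls. The binomial local limit theorem handles the denominator $P(S=T)$, but the numerator involves $\tilde\tau_\perp$, which is not lattice-valued and is only locally dependent; getting a usable Edgeworth/local expansion there under bounded-degree dependence is not available off the shelf and is arguably harder than the paper's martingale bounded-difference lemma. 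Two smaller points also need care: first, conditioning on $S=T$ turns $\tilde\tau_\perp$ into $\tilde\tau-\E\tilde\tau$, so what you recover is the law of $\tau-\E\tilde\tau$, not $\tau-\E\tau$, and you must show $\E\tau-\E\tilde\tau=o(\sqrt{\operatorname{Var}\tau})$; second, the claimed equality $\operatorname{Var}\tilde\tau_\perp=\operatorname{Var}\tau(1+o(1))$ is a statement about the conditional variance matching the unconditional residual variance and is not automatic under local dependence---it needs a separate argument (essentially a second-order version of your Haj\'ek projection). None of these is fatal, but together they make your appendix likely longer and more delicate than the paper's.
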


\begin{remark}
Theorem \ref{th: clt} requires the variance of $\tau$ to exceed $N^{1/2+\delta}$ for some $\delta > 0$, in order to rule out denegerate distributions. This condition may fail to hold if $\theta^*$ has only a vanishing fraction of non-zero entries. Otherwise, we would generally expect the variance of $\tau$ to be of order $O(N)$. (To see this, consider that if the entries of $X$ were indpendently assigned, then $\tau$ would be an unnormalized sum of $N$ locally dependent terms.)
\end{remark}

\begin{remark}
The idea of the proof is to apply a martingale central limit theorem to a Doob martingale for the test statistic. This requires convergence of the conditional variances of the martingale, which can be shown by applying a martingale version of the Azuma-Hoeffding inequality. To show the bounded difference condition required by Azuma-Hoeffding, counting and coupling arguments are used.
\end{remark}

\begin{remark}
If Assumption \ref{as: random sampling} is replaced by an assumption that the treatments are independent Bernoulli variates (which need not be identically distributed), then under Assumptions \ref{as: degree} and \ref{as: tau} the test statistic can be seen to have a dependency graph with degree bounded by $O(d_{\max}^2)$. As a result, a local dependence CLT such as \cite{rinott1994normal} or \cite[Th. 2.7]{chen2004normal} may be straightforwardly applied to show asymptotic normality in such settings.
\end{remark}

\subsection{Interval Coverage}
When the variance requirement of Theorem \ref{th: clt} is not met, the normal-based interval given by \eqref{eq: lower bound} may not have the desired coverage probability. To address this concern, a technical modification to \eqref{eq: lower bound} can be made to remove the variance requirement. The modification is to return the minimum of \eqref{eq: lower bound} and the following optimization problem
\begin{align} \label{eq: backup optimization}
 \min_{\theta \in \{0,1\}^{N}} \sum_{i=1}^N |Y_i - \theta_i|\quad  \textup{ subject to } \quad |u^T\theta| \leq N^{1/3}
\end{align}
The intuition for \eqref{eq: backup optimization} is analogous to that of \eqref{eq: lower bound}, except that Chebychev's inequality is used to bound the quantiles of $\tau$ assuming that $\Var \tau$ fails to exceed $N^{1/2+\delta}$ for any $\delta>0$. As this is precisely when Theorem \ref{th: clt} fails to hold, the required conditions for valid coverage will always be met for at least one of \eqref{eq: lower bound} and \eqref{eq: backup optimization} for each experiment in a sequence of experiments where $N \rightarrow \infty$, so that returning the minimum of the two will ensure asymptotic coverage at the desired probability. 

Theorem \ref{th: coverage} establishes asymptotic coverage for the one-sided interval given by the minimum of \eqref{eq: lower bound} and \eqref{eq: backup optimization}, with no conditions required on $\operatorname{Var} \tau$. It is proven in Appendix \ref{sec: coverage proof}.

\begin{theorem} \label{th: coverage}
Assume a sequence of experiments for which Assumptions \ref{as: random sampling}, \ref{as: degree}, and \ref{as: tau} hold, in which $\rho$, $d_{\max}$, and $C$ are constant as $N \rightarrow \infty$. Then with probability converging to at least $1-2\alpha$, it holds that $\psi$ is lower bounded by the minimum of \eqref{eq: lower bound} and \eqref{eq: backup optimization}.
\end{theorem}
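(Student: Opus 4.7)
The plan is to show that with probability converging to at least $1 - 2\alpha$, the true counterfactual $\theta^*$ is feasible for at least one of the two programs \eqref{eq: lower bound} and \eqref{eq: backup optimization}. Since both objectives equal $\psi$ when evaluated at $\theta^*$, such feasibility upper-bounds that program's optimum by $\psi$, and hence the minimum of the two optima lower-bounds $\psi$. The two programs are complementary: \eqref{eq: lower bound} provides valid coverage whenever the CLT of Theorem \ref{th: clt} applies, while \eqref{eq: backup optimization} is a Chebychev-based backup that works precisely when the variance hypothesis of Theorem \ref{th: clt} fails.

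I would fix a constant $\delta_0 \in (0, 1/6)$ and handle the two regimes by a subsequence argument. In the low-variance regime $\operatorname{Var}\tau < N^{1/2+\delta_0}$, Chebychev together with $\E \tau = 0$ gives
\[
 P\bigl(|\tau(X,\theta^*)| > N^{1/3}\bigr) \;\leq\; \operatorname{Var}\tau / N^{2/3} \;\leq\; N^{-1/6 + \delta_0} \;\longrightarrow\; 0,
\]
so $\theta^*$ satisfies the constraint of \eqref{eq: backup optimization} with probability tending to one. In the high-variance regime $\operatorname{Var}\tau \geq N^{1/2 + \delta_0}$, Theorem \ref{th: clt} gives $\tau / \sqrt{\operatorname{Var}\tau} \to_d N(0,1)$. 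Provided one can additionally establish the ratio-consistency $V(X,\theta^*) / \operatorname{Var}\tau \to_P 1$, Slutsky yields $\tau / \sqrt[\uproot{5}+]{V} \to_d N(0,1)$ and hence $P(|\tau| \leq z_{1-\alpha} \sqrt[\uproot{5}+]{V}) \to 1 - 2\alpha$, placing $\theta^*$ in the set $S$ of \eqref{eq: S} with the required probability. Every subsequence of experiments contains a further subsequence lying entirely in one of the two regimes, so the asymptotic coverage bound transfers to the full sequence.

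The technical heart of the argument is the ratio-consistency $V / \operatorname{Var}\tau \to_P 1$ in the high-variance regime. I would write $V = \sum_{i,j} v_{ij}(X_i, X_j)$ with $v_{ij}(X_i, X_j) = \tilde{f}_i(X_i) \tilde{f}_j(X_j)\, \E[u_i u_j \mid X_i, X_j]$, so that by condition \ref{item: theta} and the tower property $\E V = \operatorname{Var}\tau$. The pairs are split into \emph{close} pairs, for which the 1-neighborhoods $\{i\} \cup \{k : G_{ik}=1\}$ and $\{j\} \cup \{k : G_{jk}=1\}$ overlap, and \emph{far} pairs. There are at most $O(N d_{\max}^2)$ close pairs and their summands are bounded by $C^2$, so a direct pairwise-covariance count, using the fact that under Assumption \ref{as: random sampling} the sampling-without-replacement covariance of bounded functions of disjoint blocks of $X$ is $O(1/N)$, yields $\operatorname{Var} V_{\mathrm{close}} = O(N d_{\max}^4)$. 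For far pairs, conditions \ref{item: zero mean} and \ref{item: local} together with the $O(1/N)$ dependence strength from sampling without replacement imply $|\E[u_i u_j \mid X_i, X_j]| = O(1/N)$, so $|v_{ij}| = O(1/N)$; the same pairwise-covariance analysis for the $O(N^2)$ far pairs then yields $\operatorname{Var} V_{\mathrm{far}} = O(N)$. Combined with $(\E V)^2 \geq N^{1 + 2 \delta_0}$ and Chebychev, these give $V / \operatorname{Var}\tau \to_P 1$.

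The main obstacle is the far-pair variance bound: a naive pairwise estimate gives $O(N^2)$, and it is essential to exploit both the $O(1/N)$ smallness of each $v_{ij}$ and the $O(1/N)$ covariance decay from sampling without replacement simultaneously in order to reduce this to $O(N)$. Once that bound is in hand, the rest of the argument is a routine Slutsky–Chebychev–subsequence assembly.
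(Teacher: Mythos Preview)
Your high-level architecture matches the paper's exactly: split the sequence by a variance threshold $N^{1/2+\delta_0}$, use Chebychev on $\tau$ in the low-variance regime to land $\theta^*$ in the feasible set of \eqref{eq: backup optimization}, and use the CLT together with ratio-consistency $V/\operatorname{Var}\tau \to_P 1$ in the high-variance regime to land $\theta^*$ in $S$. The subsequence assembly is the same.

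The genuine difference is in how you establish $V/\operatorname{Var}\tau \to_P 1$. The paper does \emph{not} compute $\operatorname{Var} V$; instead it represents $X$ as a deterministic function of i.i.d.\ uniforms and applies McDiarmid's bounded-differences inequality to $V$. The bounded-difference constant reduces to a row-sum bound $\sum_j |\bar{Q}_{ij}| = O(1)$, which follows from the same close/far split you describe and the estimate $|\E[u_iu_j\mid X_i,X_j]| = O(1/N)$ for far pairs (the paper's Lemma~\ref{le: Qij}, proved via the near-independence Lemma~\ref{le: Qij helper}). Your second-moment route needs that same $O(1/N)$ magnitude estimate for far-pair summands, \emph{and} the near-independence fact a second time, now applied to $\operatorname{Cov}(v_{ij},v_{kl})$ for disjoint index sets, in order to push the far-pair contribution from the naive $O(N^2)$ down to $O(N)$. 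That works --- the counting you outline goes through once one also tracks the cross term $\operatorname{Cov}(V_{\mathrm{close}},V_{\mathrm{far}})$ and the shared-index cases --- but it is a four-index sum with several subcases. The paper's McDiarmid argument avoids that bookkeeping at the cost of the uniform-coupling construction, and yields exponential rather than Chebychev tails, though only $o_P(1)$ is needed for the theorem.
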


\begin{remark}
In practice, we view \eqref{eq: backup optimization} primarily as a technical device that is needed only to remove nondegeneracy conditions. This is because the constraint in \eqref{eq: backup optimization} that $|u^T\theta| = o(N)$ prevents the objective $\|Y - \theta\|_1$ from being small, except in cases where $u^TY = o(N)$ as well. Thus with the exception of such settings, we expect that \eqref{eq: backup optimization}  will typically be larger than \eqref{eq: lower bound}.
\end{remark}

\begin{remark}
The major difficulty of proving Theorem \ref{th: coverage} is showing consistency of the variance estimator $V$. This is done using the Azuma-Hoeffding inequality, resorting to counting arguments to establish the required bounded difference condition.
\end{remark}

\subsection{Point Estimation} Recall that $\hat{L}$ and $L$ are respectively given by \eqref{eq: hat L} and \eqref{eq: L}, and that $L$ is a lower bound on the estimand $\psi$. Theorem \ref{th: consistency} gives conditions under which the point estimate $\hat{L}$ is consistent for $L$, with $O_P(N^{1/2})$ error. It is proven in Appendix \ref{sec: consistency proof}.

\begin{theorem} \label{th: consistency}
Let the conditions of Theorem \ref{th: coverage} hold, and additionally let 
\begin{align} \label{eq: consistency condition}
|u_i(X)| &\geq C_{\min}, &  \forall\ i \in [N],\ X \in \{0,1\}^N,
\end{align} 
where $C_{\min}$ is constant as $N \rightarrow \infty$. Then
\begin{align}\label{eq: consistency}
|L - \hat{L}| = O_P(N^{1/2})
\end{align}
\end{theorem}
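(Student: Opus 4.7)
\medskip
\noindent\textbf{Proof plan for Theorem \ref{th: consistency}.}

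The plan is to reduce the claim to a variance bound on a sum of bounded, locally dependent terms under sampling without replacement. First, I apply the reverse triangle inequality to the definitions of $L$ and $\hat L$ in \eqref{eq: L} and \eqref{eq: hat L}: since both share the denominator $\|u\|_\infty$ and differ in the numerator only by replacing $\theta^*$ by $0.5\cdot\mathbf{1}$, we get
\[
 |L - \hat L| \;\leq\; \frac{\bigl|u^T(0.5\cdot \mathbf{1} - \theta^*)\bigr|}{\|u\|_\infty}.
\]
The hypothesis \eqref{eq: consistency condition} guarantees $\|u\|_\infty \geq C_{\min} > 0$ deterministically, so it suffices to show that $u^T(0.5\cdot\mathbf{1} - \theta^*) = O_P(N^{1/2})$.

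Next, I set $w_i = u_i(0.5 - \theta_i^*)$ and verify $\mathbb{E} w_i = 0$: condition \ref{item: theta} of Assumption \ref{as: tau} makes $\theta_i^*$ a function of $X_i$ alone, so
\[
 \mathbb{E}[w_i \mid X_i] = (0.5 - \theta_i^*)\,\mathbb{E}[u_i \mid X_i] = 0
\]
by \ref{item: zero mean}, and taking expectations gives $\mathbb{E}w_i = 0$. Condition \ref{item: local} plus the fact that $\theta_i^*$ depends only on $X_i$ means $w_i$ depends on $X$ only through $X_{A_i}$ where $A_i = \{i\} \cup \{k : G_{ik} = 1\}$, which has cardinality at most $d_{\max}+1$ by Assumption \ref{as: degree}. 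Also $|w_i| \leq C/2$ by \ref{item: u bounded}. By Chebyshev's inequality, the claim reduces to showing $\operatorname{Var}\!\bigl(\sum_{i=1}^N w_i\bigr) = O(N)$.

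For the variance, I split the $N^2$ covariance terms by whether $A_i \cap A_j$ is empty. Overlapping pairs are handled by a crude $|\operatorname{Cov}(w_i,w_j)| \leq C^2/2$ bound, combined with the standard counting argument that for each $i$ there are $O(d_{\max}^2)$ indices $j$ with $A_i \cap A_j \neq \emptyset$; together they contribute $O(Nd_{\max}^2) = O(N)$. For disjoint pairs, under independent Bernoulli assignment the covariance would vanish, but sampling without replacement produces a residual correlation of order $1/N$ between any two coordinates of $X$. A short coupling argument, of the type used in the proof of Theorem \ref{th: clt}, shows that $|\operatorname{Cov}(w_i,w_j)| = O(d_{\max}/N)$ for such pairs; summed over $O(N^2)$ pairs this again contributes $O(N)$. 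The main obstacle, and the only nontrivial step, is this last covariance bound for disjoint neighborhoods under sampling without replacement -- but it follows directly from the coupling/Azuma-Hoeffding machinery that is already developed to prove Theorem \ref{th: clt} and to establish consistency of $V$ in Theorem \ref{th: coverage}, so I expect it to be a short appeal rather than new work. Combining the two contributions yields $\operatorname{Var}(\sum_i w_i) = O(N)$, hence $u^T(0.5\cdot\mathbf{1}-\theta^*) = O_P(N^{1/2})$, and dividing by $\|u\|_\infty \geq C_{\min}$ gives \eqref{eq: consistency}.
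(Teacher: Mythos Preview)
Your argument is correct, but it differs in its core tool from the paper's. Both proofs begin identically: reverse triangle inequality, the observation $\E[u_i(0.5-\theta_i^*)\mid X_i]=0$, and the reduction to controlling $\varepsilon_0:=\sum_i u_i(\theta_i^*-0.5)$. The divergence is in how $\varepsilon_0=O_P(N^{1/2})$ is established. The paper represents the sampling-without-replacement vector $X$ as a function of $N$ i.i.d.\ Uniform$[0,1]$ variables (treat the $T$ smallest), notes that perturbing one uniform changes at most two coordinates of $X$ and hence at most $2(d_{\max}+1)$ summands, and applies McDiarmid's bounded-differences inequality to obtain an exponential tail bound. Your route instead bounds $\operatorname{Var}\varepsilon_0$ directly---overlapping-neighborhood pairs by counting, disjoint pairs by the $O(1/N)$ residual-correlation bound---and then invokes Chebyshev. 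The $O(1/N)$ bound you need for disjoint pairs is exactly Lemma~\ref{le: Qij} (used in the proof of Lemma~\ref{le: V consistency}), so your appeal to existing machinery is justified. The paper's route is more self-contained and yields the stronger sub-Gaussian tail, while yours is a clean second-moment argument that recycles the covariance estimate already needed for the variance-estimator consistency; either suffices for the stated $O_P(N^{1/2})$ conclusion.
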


 We note that the requirement \eqref{eq: consistency condition} holds for $\tau$ and $\tau^{\alt}$ given by \eqref{eq: tau} and \eqref{eq: tau alt}, provided that the propensities $P(W_i = 1|X_i=x)$ are bounded away from 0 and 1 for all $i \in [N]$ and $x \in \{0,1\}$.

Theorem \ref{th: minimax} gives conditions under which $\hat{L}$ attains the minimax expected squared error for estimation of $L$. It is proven in Appendix \ref{sec: proof minimax}.

\begin{theorem} \label{th: minimax}
Assume a sequence of experiments in which $Y$ is given by \eqref{eq: f} and Assumption \ref{as: random sampling} holds, with $\rho$ constant as $N \rightarrow \infty$. Let $L$ and $\hat{L}$ be defined respectively by \eqref{eq: L} and \eqref{eq: hat L}, for any vector $u \equiv u(X) \in \mathbb{R}^N$ that is a function of the treatment vector $X$, and let $\mathcal{F}$ denote the space of possible potential outcome mappings $\{f_i\}_{i=1}^N$. Then the $\hat{L}$ attains the best possible worst-case expected squared error,
\begin{align}\label{eq: minimax 1}
\max_{\{f_i\}_{i=1}^N \in \mathcal{F}} \mathbb{E}[(L - \hat{L})^2] = \min_{\hat{L} \in \mathcal{L}} \ \ \max_{\{f_i\}_{i=1}^N \in \mathcal{F}} \ \ \mathbb{E}[(L - \hat{L})^2]
\end{align}
Here $\mathcal{L}$ denotes the space of possible estimators taking $(X,Y)$ as input and returning an estimate for $L$.
\end{theorem}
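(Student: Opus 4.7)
The plan is to show both sides of the minimax identity \eqref{eq: minimax 1} equal the common quantity
\[ V := \sup_{\tilde f_1,\ldots,\tilde f_N} \E_X\!\left[\bigl(u^T(\theta^* - 0.5)/\|u\|_\infty\bigr)^2\right], \]
where the supremum is over functions $\tilde f_i:\{0,1\}\to\{0,1\}$ determining $\theta_i^* = \tilde f_i(X_i)$, and $0.5$ denotes $(0.5,\ldots,0.5)\in\mathbb R^N$. The upper bound $\sup_{\{f_i\}}\E[(L-\hat L)^2]\le V$ follows directly from the reverse triangle inequality: $|L-\hat L| = \bigl||u^T(Y-\theta^*)|-|u^T(Y-0.5)|\bigr|/\|u\|_\infty \le |u^T(\theta^*-0.5)|/\|u\|_\infty$, and the right side depends on $\{f_i\}$ only through $\tilde f$, not through $Y$.

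For the matching lower bound I would use Le Cam's two-point method. Write $\alpha := u^T Y/\|u\|_\infty$, $\beta := u^T\theta^*/\|u\|_\infty$, $c := (\sum_i u_i)/(2\|u\|_\infty)$, and $R := \|u\|_1/(2\|u\|_\infty)$. For any $\tilde f$, I would define two adversary strategies $A$ and $A'$ whose potential outcome functions agree at every non-isolated $X$ (both setting $f_i(X) = \mathbb I[u_i(X)>0]$ there) but differ at the isolated configurations $(0,\ldots,0,X_i,0,\ldots,0)$: adversary $A$ takes $\tilde f_i$, while $A'$ takes $1-\tilde f_i$. Assumption \ref{as: random sampling}, once $N$ is large enough to force $T\ge 2$, ensures that realized $X$'s have at least two treated units and therefore never coincide with an isolated configuration (which has at most one), so these two specifications are jointly consistent and produce the same observed $Y$. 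Under the uniform prior on $\{A,A'\}$ the statistician sees the same $(X,Y)$ under both adversaries, so any estimator $\tilde L(X,Y)$ has Bayes risk $\E_X[((L_A-L_{A'})/2)^2]$, attained by the posterior mean. The chosen $Y$ gives $\alpha = c+R$; since $\beta\in[c-R,c+R]$ always, both $\alpha-\beta\ge 0$ and $\alpha-(2c-\beta)\ge 0$, and the absolute values inside $L_A$ and $L_{A'}$ open with consistent signs, yielding $L_A-L_{A'}=2(c-\beta)$ and Bayes risk $\E_X[(\beta-c)^2]$. Taking the supremum over $\tilde f$ gives $\min_{\tilde L}\sup_{\{f_i\}}\E[(L-\tilde L)^2]\ge V$.

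Combining the two bounds establishes that $\hat L$ attains the minimax risk. The main obstacle is the joint construction of the paired adversaries: one must simultaneously invoke Assumption \ref{as: random sampling} so that the isolated configurations lie outside the support of $X$ (allowing $A$ and $A'$ to share the same $Y$), and choose $Y_i(X)=\mathbb I[u_i(X)>0]$ to attain $\alpha = c+R$, which is exactly the condition that makes the reverse triangle inequality tight in both $L_A$ and $L_{A'}$ simultaneously and thereby closes the gap with the upper bound.
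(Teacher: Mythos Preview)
Your proposal is correct and follows essentially the same route as the paper. The paper's proof also bounds the worst-case error of $\hat L$ above by $\sup_f \E[(u^T(\theta^*-0.5)/\|u\|_\infty)^2]$ via the reverse triangle inequality, and obtains the matching lower bound by constructing two outcome mappings $f^1,f^2$ that agree with $f^*(X)\in\argmax_y |u^Ty|$ (your $Y_i=\mathbb I[u_i>0]$) on the event $\{\sum_i X_i\ge 2\}$ while having $\theta^2=1-\theta^1$ under isolated treatment; it then uses the identity $\min_h\max\{\E[(g_1-h)^2],\E[(g_2-h)^2]\}=\E[((g_1-g_2)/2)^2]$ (your Bayes-risk computation under the uniform prior on the two adversaries) and the same sign-consistency observation to close the gap.
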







\section{Example}

\cite{miguel2004worms} considers a experiment that took place in a region of Kenya with high  infection rates for intestinal helminths (such as hookworm). 50 participating schools were quasi-randomized into 2 groups. Schools in group 1 received free deworming treatment in year 1998, while the others received treatment in later years. All students were surveyed in year 1999 on various outcomes including parasitic infection ($N = 1969$).\footnote{In fact, 75 schools were divided into 3 groups by ordering the schools by zone and school name, and then assigning every 3rd school to the same group. Group 3 and also school 133 in group 2 were excluded from the 1999 outcome survey, and were removed from our analysis.}

To demonstrate our methods, we will assume that treatment was assigned by sampling schools without replacement after stratifying by administrative zone. For each student $i \in [N]$, let $X_i \in \{0,1\}$ and $W_i \in \{0,1\}$ respectively denote their direct and indirect treatment status, where $X_i=1$ if $i$'s school was assigned to group 1, and $W_i=1$ if at least half of the participating schools in a 6 km radius of $i$'s school were assigned to group 1, not counting $i$'s own school. Let $Y_i=0$ denote the outcome that student $i$ reported no parasitic infections in the 1999 survey, with $Y_i=1$ indicating otherwise. Let $\theta_i^*$ denote unit $i$'s outcome under a counterfactual of school-level isolated treatment -- that is, under the counterfactual in which $i$'s school received its assigned treatment, while all other schools received the control.

Figure \ref{fig: worms data} and Table \ref{table: worms data} show counts of schools and average outcomes as a function of direct and indirect treatment. Grouping the units by the indirect treatment $W$ produced large differences in outcome: among students with high levels of indirect treatment ($W_i=1$), only 19\% of students reported infections in the outcome survey, compared to 42\% of students with low levels of indirect treatment. This difference in outcomes is slightly larger than the difference of 22 percentage points that arises when grouping by $X$, the direct treatment variable. When grouping by $W$, both groups had the same proportion of schools that received direct treatment, as shown in Table \ref{table: worms data}.

The conservative point estimate $\hat{L}$ equaled 190 individuals (9.6\%), suggesting that a substantial fraction of students were affected by the treatment of a school other than their own. On the other hand, interval estimates based on Theorem \ref{th: coverage} suggest high levels of uncertainty, although statistical significance was retained: the 90\% level one-sided interval for $\psi$ was $[1.3\%,\, 100\%]$, and the 95\% level one-sided interval was $[0.3\%,\, 100\%]$. 

These estimates are weaker in interpretation and have greater uncertainty than those of \cite{miguel2004worms}. However, their analysis assumed knowledge of which schools are affected by each other, and ruled out the possibility of longer range interactions. Such interactions might nonetheless occur; for example, an infection in one school might spread to neighboring schools, and then from the neighboring schools to more distant ones. In a sense, our results complement theirs, suggesting that while the particulars of the earlier spillover analysis might be debatable due to model uncertainty, the existence and prevalence of such spillovers may be credible under much weaker assumptions.

\begin{table}
\begin{center}
\begin{tabular}{ l | r r }
 Treatment & \# Schools & Average Outcome \\
 \hline
 $W_i=1, X_i=1$ & 10 & 0.14 \\ 
 $W_i=1, X_i=0$ & 10 & 0.22 \\  
 $W_i=0, X_i=1$ & 15 & 0.23 \\
 $W_i=0, X_i=0$ & 14 & 0.53
\end{tabular}
\end{center}
\caption{School counts and average outcomes}
\label{table: worms data}
\end{table}

\begin{figure}
\begin{center}
\includegraphics[width=2.7in]{./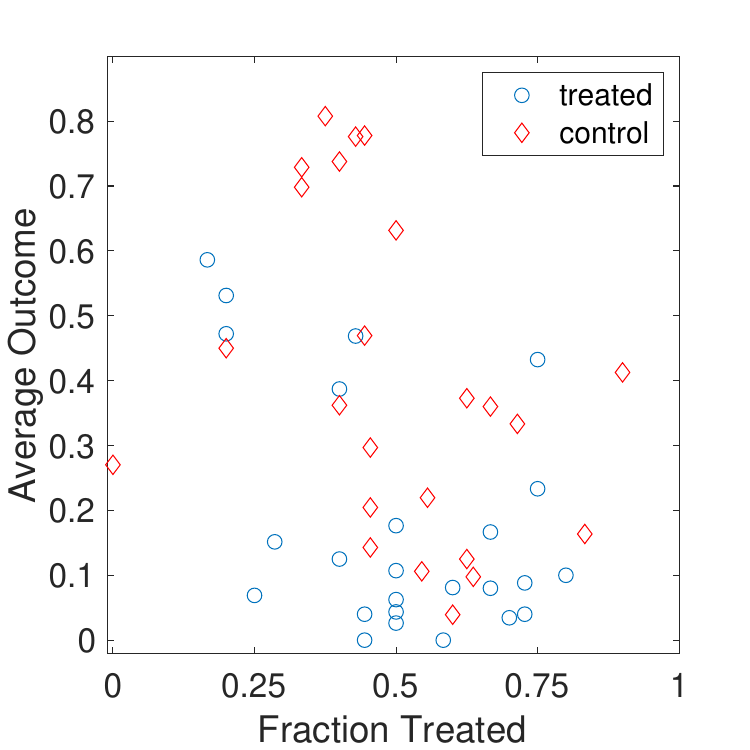}
\end{center}
\caption{Fraction of students with at least 1 infection in each school, versus fraction of treated neighboring schools. Direct treatment of schools shown by marker shape.}
\label{fig: worms data}
\end{figure}

\appendix

\section{Proof of Theorem \ref{th: clt}} \label{sec: CLT proof}

\subsection{Preliminaries}

Without loss of generality, we may assume the sample of treated units to be drawn sequentially -- i.e., starting with an urn containing $N$ balls, each labeled to correspond to a unit in the experiment, we may draw balls one by one without replacement until $T$ are drawn. Let $J=(J_1,\ldots,J_T)$ denote the treated units in the order that they are drawn, so that the treatment vector $X$ depends on $J$ by
\begin{equation}\label{eq: J to X}
 X_i = 1\{i \in J\}, \qquad \forall \ i \in [N].
\end{equation}
Then given a statistic $\tau = \sum_{i=1}^N v_i(X)$ as required by Theorem \ref{th: clt}, 
we may write $v_i(J)$ to denote $v_i(X)$ with $X$ given by \eqref{eq: J to X}.

Given the adjacency matrix $G$, let $\eta_i$ denote the neighbors of unit $i$, and let $\Gamma_i$ denote the units to whom $i$ is a neighbor,
\begin{align} \label{eq: eta gamma}
\eta_i &= \{j: G_{ij} = 1\}, & \Gamma_i & = \{j: G_{ji} = 1\}, & i \in [N]
\end{align}
and let $\bar{\eta}_i$ and $\bar{\Gamma}_i$ denote the union of unit $i$ with $\eta_i$ or $\Gamma_i$ 
\begin{align} \label{eq: bar eta gamma}
\bar{\eta}_i & = i \cup \eta_i, & \bar{\Gamma}_i & = i \cup \Gamma_i, & i \in [N]
\end{align}
so that $\bar{\eta}_i$ is the set of units whose treatments affect the value of $v_i(X)$, and $\bar{\Gamma}_i$ is the set of functions in $\{v_j\}_{j=1}^N$ that are affected by the treatment of unit $i$.

Given positive integers $a < b$, we will use the notation $j_{a:b}$ to denote a vector $(j_a, j_{a+1},\ldots, j_b)$. For example, we can write $J_{1:t}$ to denote $(J_1,\ldots,J_t)$, and $J_{t+1:T} = (J_{t+1},\ldots, J_T)$. Vectors may be concatenated; for example, given $J_{1:t-1}$ and a scalar $a$, we can write $(J_{1:t-1},\, a)$ to denote their $t$-dimensional concatenation.

Set operations involving a vector will implicitly convert the vector to the set of its unique values. For the vector $J$ that is generated by sampling without replacement, this results in the set $\{J_1,\ldots,J_T\}$. For example, we can write $\bar{\eta}_i \cap J$, $\bar{\eta}_i / J$, and $\bar{\eta}_i\, \Delta\, J$ to respectively denote the intersection, set difference, and symmetric difference of the sets $\bar{\eta}_i$ and $\{J_1,\ldots,J_T\}$. Similarly, we can write $J_t \notin J_{1:t-1}$ to denote that $J_t$ may not be a member of the set $\{J_1,\ldots,J_{t-1}\}$. 

Using these conventions, the bounded dependence condition \eqref{eq: v local} may be written for each function $v_i$ as
\begin{equation} \label{eq: local 2}
 v_i(J) = v_i(\bar{\eta}_i \cap J),
\end{equation}
meaning that the value of $v_i$ depends only on the treatment status of the units in $\bar{\eta}_i$.

We will let $\emptyset$ denote the empty set.

\subsection{Proof of Theorem \ref{th: clt}}
Our proof will use a martingale central limit theorem found in 
\cite[Thm 35.12]{billingsley2013convergence}: 

\begin{theorem} \label{th: martingale clt}
Given a Martingale array $M_{nt}$, with differences $Z_{nt} = M_{nt} - M_{n,t-1}$ and conditional variances 
$\sigma_{nt}^2 = E[Z_{nt}^2| \mathcal{F}_{n,t-1}]$, let the following hold:
\begin{enumerate}[label=\text{(C\arabic*)}]
    \item For all $n$, let $M_{nt}$ and $\sum_{r=1}^t \sigma_{nr}^2$ converge almost surely to finite limits as $t \rightarrow \infty$. \label{item: bounded limits}
    \item For all $\epsilon > 0$, it holds that $\lim_{n \rightarrow \infty} \sum_{t=1}^\infty E[Z_{nt}^2 \cdot 1\{|Z_{nt}| > \epsilon\}] = 0$ \label{item: bounded increments}
    \item $\sum_{t=1}^\infty \sigma_{nt}^2 \rightarrow_p 1$ as $n \rightarrow \infty$. \label{item: convergence of variance}
\end{enumerate}
Then $\sum_{t=1}^\infty Z_{nt}$ converges in distribution to $N(0,1)$. 
\end{theorem}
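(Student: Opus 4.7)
The plan is to establish the characteristic-function convergence $\phi_n(s) := \E[\exp(isS_n^\infty)] \to e^{-s^2/2}$ for every $s \in \mathbb{R}$, where $S_n^\infty := \sum_{t \geq 1} Z_{nt}$ (a.s.-finite by condition \ref{item: bounded limits}). The vehicle is McLeish's complex-exponential process
\[
U_n(t) = \exp\!\left(is\, S_n^t + \tfrac{s^2}{2}V_n^t\right), \qquad S_n^t := \sum_{r \leq t} Z_{nr}, \quad V_n^t := \sum_{r \leq t}\sigma_{nr}^2.
\]
The strategy is to show $\E[U_n(\infty)] \to 1$; combined with $V_n^\infty \to_p 1$ from condition \ref{item: convergence of variance}, this delivers $\phi_n(s) \to e^{-s^2/2}$ by dividing out $\exp(s^2 V_n^\infty / 2) \to_p e^{s^2/2}$, modulo a truncation argument to secure uniform integrability.

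The key one-step identity is
\[
\E[U_n(t)\mid\mathcal{F}_{n,t-1}] \;=\; U_n(t-1) \cdot e^{(s^2/2)\sigma_{nt}^2} \cdot \E[e^{isZ_{nt}}\mid \mathcal{F}_{n,t-1}].
\]
Expanding $e^{ix} = 1 + ix - x^2/2 + r(x)$ with $|r(x)| \leq \min(|x|^3/6,\, x^2)$ and using the martingale property $\E[Z_{nt}\mid \mathcal{F}_{n,t-1}] = 0$, I obtain $\E[e^{isZ_{nt}}\mid \mathcal{F}_{n,t-1}] = 1 - (s^2/2)\sigma_{nt}^2 + R_{nt}$. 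Splitting the remainder on the Lindeberg event $\{|Z_{nt}| \leq \epsilon\}$ versus its complement bounds $|R_{nt}|$ by a linear combination of $\epsilon\, \sigma_{nt}^2$ and $\E[Z_{nt}^2 \mathbf{1}\{|Z_{nt}|>\epsilon\}\mid \mathcal{F}_{n,t-1}]$. Since $e^{x}(1-x) = 1 + O(x^2)$, the step-wise multiplicative deviation of $U_n(t)$ from a martingale is controlled by $\sigma_{nt}^4 + |R_{nt}|$.

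To aggregate the errors, I would introduce the stopping time $\tau_n = \min\{t : V_n^t \geq 1 + \delta\}$, so that $|U_n(t \wedge \tau_n)|$ is uniformly bounded by $e^{s^2(1+\delta)/2}$ and limits may be interchanged. The quartic contributions $\sum_t \sigma_{nt}^4 \leq \max_t \sigma_{nt}^2 \cdot (1+\delta)$ vanish because \ref{item: bounded increments} forces $\max_t \sigma_{nt}^2 \to 0$; the tail remainder terms vanish by \ref{item: bounded increments} directly; and sending $\epsilon \to 0$ after $n \to \infty$ handles the linear-in-$\epsilon$ piece. This yields $\E[U_n(\infty \wedge \tau_n)] \to 1$, and condition \ref{item: convergence of variance} implies $P(\tau_n < \infty) \to 0$, so the stopping has no asymptotic effect; sending $\delta \to 0$ closes the loop.

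The main obstacle is the bookkeeping around a complex-valued, unbounded process: converting multiplicative near-martingale increments into additive $L^1$ error bounds, securing uniform integrability via the $\tau_n$ truncation so that characteristic-function convergence follows from mean convergence of $U_n$, and interleaving the Lindeberg truncation inside the Taylor remainder so that no term is double-counted. Once the stopping-and-Taylor decomposition is set up, each ingredient is a routine estimate; the subtlety is entirely in coordinating them.
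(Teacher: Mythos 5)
The paper contains no proof of this statement to compare against: it is imported verbatim, with citation, as Theorem 35.12 of Billingsley, and is used as a black box in the proof of the paper's Theorem 1. Your proposal reconstructs what is essentially the standard McLeish--Billingsley argument for this result --- the exponential process $U_n(t)=\exp\bigl(isS_n^t+\tfrac{s^2}{2}V_n^t\bigr)$, Taylor expansion of the conditional characteristic function with the remainder split on the Lindeberg event, a variance-threshold stopping time to gain boundedness, and division by $e^{(s^2/2)V_n^\infty}$ via (C3) at the end --- so the architecture is the right one, and each of (C1), (C2), (C3) is used where it should be.

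One step fails as written, and it is load-bearing. With $\tau_n=\min\{t:V_n^t\ge 1+\delta\}$, the stopped process is \emph{not} uniformly bounded by $e^{s^2(1+\delta)/2}$: at $t=\tau_n$ the variance sum overshoots the threshold by the final increment, so one only has $V_n^{\tau_n}\le 1+\delta+\sigma_{n\tau_n}^2$, and no pointwise bound on a single conditional variance is available under (C1)--(C3). The Lindeberg condition gives $\max_t\sigma_{nt}^2\to 0$ only in probability (via $\sigma_{nt}^2\le\epsilon^2+\mathbb{E}[Z_{nt}^2 1\{|Z_{nt}|>\epsilon\}\mid\mathcal{F}_{n,t-1}]$), which cannot be fed back into expectations such as $\mathbb{E}[U_n(t\wedge\tau_n)R_{nt}]$ once $|U_n|$ is unbounded; the same overshoot invalidates $\sum_{t\le\tau_n}\sigma_{nt}^4\le(1+\delta)\max_t\sigma_{nt}^2$ and the bounded-argument step $e^x(1-x)=1+O(x^2)$ at $t=\tau_n$. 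The repair is standard and cheap: $\sigma_{nt}^2$ is $\mathcal{F}_{n,t-1}$-measurable, so the variance sum is predictable and you may stop \emph{before} the threshold is crossed. Take $\tau_n=\inf\{t\ge 0:V_n^{t+1}\ge 1+\delta\}$; this is a genuine stopping time because $\{\tau_n\le t\}=\{V_n^{t+1}\ge 1+\delta\}\in\mathcal{F}_{n,t}$, it forces $V_n^{t\wedge\tau_n}<1+\delta$ for every $t$ (no overshoot, and every retained increment has $\sigma_{nt}^2<1+\delta$), and still $P(\tau_n<\infty)\le P(V_n^\infty\ge 1+\delta)\to 0$ by (C3). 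With this single change, your outline goes through.
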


We will apply Theorem \ref{th: martingale clt} by the following construction. Given an experiment involving $N$ units, with $J = (J_1,\ldots,J_T)$, let $M_{N0}=\mathbb{E}\tau/\stdt$ and let $M_{N1},\ldots,M_{NT}$ denote the conditional expectation of $\tau$ as entries of $J$ are sequentially revealed,
\[ M_{Nt} = \frac{E[\, \tau\, |\,  J_{1:t}\, ]}{\stdt}.\]
It can been seen that $\{M_{Nt}\}_{t=0}^T$ is a martingale and that $M_{NT} = \tau/\stdt$. For $t \in [T]$, let $Z_{Nt} = M_{Nt} - M_{N,t-1}$ denote the martingale difference, and let $\sigma_{Nt}^2 = \mathbb{E}[Z_{Nt}^2 | J_{1:t-1}]$ denote the variance of $Z_{Nt}$ given $J_{1:t-1}$. To extend the martingale beyond time $T$, for all $t>T$ let $M_{Nt} = M_{NT}$, $Z_{Nt} = 0$, and $\sigma_{Nt}^2 = 0$.




Lemma \ref{le: bounded increments} states that the differences $Z_{Nt}$ are bounded and go to zero as $\operatorname{Var} \tau$ increases. It will be used to establish Condition \ref{item: bounded increments} of Theorem \ref{th: martingale clt}.

\begin{lemma}\label{le: bounded increments}
Under the conditions of Theorem \ref{th: clt}, there exists $C_1$ such that for $t \in [T]$, it holds that $|Z_{Nt}| \leq C_1/\stdt$, where the value of $C_1$ is fixed as $N\rightarrow \infty$.
\end{lemma}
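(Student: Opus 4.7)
The plan is to reduce the problem to a bounded-differences statement on the conditional expectations appearing in the Doob martingale, and then bound those differences by a swap-coupling on the underlying sampling-without-replacement experiment.

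First I would rewrite the difference as
\[
Z_{Nt} \;=\; \frac{1}{\stdt}\Bigl(\,E[\tau\mid J_{1:t-1},J_t]\;-\;E_{J_t'}\bigl[E[\tau\mid J_{1:t-1},J_t']\bigr]\,\Bigr),
\]
where $J_t'$ has the conditional distribution of $J_t$ given $J_{1:t-1}$. By the triangle inequality (and Jensen's), it suffices to bound, uniformly over $J_{1:t-1}$ and over all admissible values $a,b\in[N]\setminus J_{1:t-1}$,
\[
\Delta(a,b) \;:=\; \bigl|\,E[\tau\mid J_{1:t-1},J_t=a]\;-\;E[\tau\mid J_{1:t-1},J_t=b]\,\bigr|,
\]
by some constant $C_1$ that does not depend on $N$; the lemma then follows with an extra factor absorbed into $C_1$.

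The main step is a swap coupling. View $J$ as the first $T$ entries of a uniformly random permutation $\pi$ of $[N]$, conditioned on $\pi_{1:t-1}=J_{1:t-1}$. Let $\pi^{(a)}$ be drawn with $\pi^{(a)}_t=a$, and define $\pi^{(b)}$ by swapping the labels $a$ and $b$ inside $\pi^{(a)}$: if $b$ sits at position $s>t$ in $\pi^{(a)}$, put $b$ at position $t$ and $a$ at position $s$. This map is a bijection between the two conditional sample spaces and preserves the uniform measure, so $\pi^{(b)}$ has the correct conditional law. Two cases arise: if $s\le T$, the treated sets coincide as sets and hence $\tau(X^{(a)})=\tau(X^{(b)})$; if $s>T$, the treatment vectors differ exactly in the two coordinates $a$ and $b$. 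Using the locality condition \eqref{eq: v local}/\eqref{eq: local 2}, only the terms $v_i$ with $i\in\bar\Gamma_a\cup\bar\Gamma_b$ can change, and there are at most $2(d_{\max}+1)$ such indices, each bounded by $C$ in absolute value by \eqref{eq: v bounded}. Hence $|\tau(X^{(a)})-\tau(X^{(b)})|\le 4C(d_{\max}+1)$ pointwise under the coupling.

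Taking expectations gives $\Delta(a,b)\le 4C(d_{\max}+1)$, and feeding this into the Doob-difference bound yields $|Z_{Nt}|\le C_1/\stdt$ with $C_1 := 8C(d_{\max}+1)$ (the factor of $2$ accommodates the difference between $J_t$ and its independent copy $J_t'$). Since $C$ and $d_{\max}$ are constant as $N\to\infty$ by hypothesis, so is $C_1$. The main obstacle is the coupling step: one must (a) produce a coupling whose marginals are \emph{exactly} correct so that taking expectations preserves the difference, and (b) exploit the neighborhood-locality of $v_i$ to limit the affected summands to an $O(d_{\max})$ set. Everything else is bookkeeping around the Doob construction and the martingale-difference representation.
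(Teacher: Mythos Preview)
Your proof is correct, and the swap-coupling argument is clean: the bijection between the two conditional sample spaces is exactly right, the two cases ($s\le T$ vs.\ $s>T$) are handled correctly, and the locality bound $|\tau(X^{(a)})-\tau(X^{(b)})|\le 4C(d_{\max}+1)$ follows immediately. (In fact the extra factor of $2$ you add at the end is unnecessary: since $E[\tau\mid J_{1:t-1}]$ is already an average of $E[\tau\mid J_{1:t-1},J_t']$ over $J_t'$, one gets $|Z_{Nt}|\cdot\stdt\le\sup_b\Delta(J_t,b)\le 4C(d_{\max}+1)$ directly.)

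The paper takes a different route. Rather than coupling, it works term-by-term on the $v_i$: it first establishes, via Lemma~\ref{le: basic prob} and an averaging argument over $j_t'$, the per-term bound
\[
\bigl|\,E[v_i\mid J_{1:t}]-E[v_i\mid J_{1:t-1}]\,\bigr|\;\le\;\frac{2C\,|\bar\eta_i|}{N-t}\qquad\text{whenever }J_t\notin\bar\eta_i,
\]
and then sums over $i$, splitting into the (at most $d_{\max}+1$) indices with $J_t\in\bar\eta_i$ and the rest. Your coupling is more elementary and self-contained for Lemma~\ref{le: bounded increments} alone, and sidesteps Lemma~\ref{le: basic prob} entirely. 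The paper's longer route, however, is not wasted effort: the displayed per-term bound above is precisely what is reused as identity (I2) in the proof of Lemma~\ref{le: diff}, which drives the bounded-differences argument for $\sum_t\sigma_{Nt}^2$ in Lemma~\ref{le: bounded diff}. So the two proofs trade off local simplicity against producing an intermediate estimate needed downstream.
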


\noindent To establish condition \ref{item: convergence of variance} of Theorem \ref{th: martingale clt}, we will require a martingale version of the method of bounded differences (Azuma-Hoeffding), taken from \cite[Cor. 5.1]{dubhashi2009concentration}.

\begin{theorem}\label{th: martingale bounded diff}
Given a random vector $J = (J_1,\ldots,J_T)$ whose entries need not be independent, let the function $f$ satisfy the bounded differences condition
\begin{align} \label{eq: bounded diff}
| E[f(J)|J_{1:\ell} = j_{1:\ell}] - E[f(J)|J_{1:\ell-1}=j_{1:\ell-1}, J_\ell= j_\ell']| \leq c_\ell,
\end{align}
for all $\ell \in [T]$, $j_{1:\ell}$, and $j_\ell'$. Then $\mathbb{P}(|f - Ef| \geq \epsilon) \leq 2e^{-2\epsilon^2/(\sum_{\ell=1}^T c_\ell^2)}$.
\end{theorem}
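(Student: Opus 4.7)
The plan is to apply a Chernoff bound to the Doob martingale of $f$ with respect to the filtration generated by the coordinates of $J$, and to control the moment generating function of its increments using Hoeffding's lemma in its sharpest (interval-length) form.

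First I would set $M_\ell = \mathbb{E}[f(J) \mid J_{1:\ell}]$ for $\ell = 0, 1, \ldots, T$, so that $M_0 = \mathbb{E} f$, $M_T = f(J)$, and $f - \mathbb{E} f = \sum_{\ell=1}^T D_\ell$ with increments $D_\ell = M_\ell - M_{\ell-1}$. Writing $g_\ell(j_{1:\ell-1}, j) := \mathbb{E}[f(J) \mid J_{1:\ell-1} = j_{1:\ell-1}, J_\ell = j]$, the bounded difference hypothesis \eqref{eq: bounded diff} says precisely that $g_\ell(J_{1:\ell-1}, \cdot)$ varies over a range of at most $c_\ell$. Since
\[ D_\ell \;=\; g_\ell(J_{1:\ell-1}, J_\ell) - \mathbb{E}\bigl[g_\ell(J_{1:\ell-1}, J_\ell) \,\big|\, J_{1:\ell-1}\bigr], \]
conditional on $J_{1:\ell-1}$ the random variable $D_\ell$ is centered and supported in an interval of length at most $c_\ell$.

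Next I would invoke Hoeffding's lemma: any centered random variable $Z$ supported in an interval of length $c$ satisfies $\mathbb{E} e^{\lambda Z} \le \exp(\lambda^2 c^2/8)$. Applied conditionally, this yields $\mathbb{E}[e^{\lambda D_\ell} \mid J_{1:\ell-1}] \le \exp(\lambda^2 c_\ell^2/8)$ almost surely. Iterating via the tower property produces $\mathbb{E}\bigl[e^{\lambda(f - \mathbb{E} f)}\bigr] \le \exp\!\bigl(\lambda^2 \sum_{\ell=1}^T c_\ell^2 / 8\bigr)$. A Chernoff bound $\mathbb{P}(f - \mathbb{E} f \ge \epsilon) \le e^{-\lambda \epsilon} \mathbb{E} e^{\lambda(f - \mathbb{E} f)}$, optimized at $\lambda^* = 4\epsilon / \sum_\ell c_\ell^2$, gives $\exp\!\bigl(-2\epsilon^2/\sum_\ell c_\ell^2\bigr)$. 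Applying the same argument to $-f$ and taking a union bound yields the two-sided conclusion with the factor of $2$ in front.

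The only subtlety is obtaining the sharp constant $2$ in the exponent rather than the weaker $1/2$ that would arise from naively bounding $|D_\ell| \le c_\ell$ and using the symmetric version of Hoeffding's lemma. The sharpening requires noting that \eqref{eq: bounded diff} constrains the conditional distribution of $D_\ell$ to an interval of length $c_\ell$ (not $2c_\ell$), so the interval-length form of Hoeffding's lemma is what must be invoked. Beyond that, the argument uses only the tower property of conditional expectation; independence of the coordinates of $J$ is never required, which is why the result applies to dependent samples such as those produced by sampling without replacement.
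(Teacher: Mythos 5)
Your proof is correct. Note that the paper does not prove this theorem at all: it is imported verbatim from the literature (cited as Corollary 5.1 of Dubhashi and Panconesi's book on concentration inequalities), so there is no in-paper argument to compare against. What you have written is the standard proof of that cited result: the Doob martingale decomposition $f - \mathbb{E}f = \sum_\ell D_\ell$, the observation that condition \eqref{eq: bounded diff} confines each increment $D_\ell$, conditionally on $J_{1:\ell-1}$, to an interval of length $c_\ell$ (not $2c_\ell$), the interval-length form of Hoeffding's lemma applied conditionally, the tower-property iteration of the moment generating function bound, and the optimized Chernoff bound with $\lambda^* = 4\epsilon/\sum_\ell c_\ell^2$ yielding the constant $2$ in the exponent. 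Your closing remarks are also on point: the sharp constant genuinely requires the interval-length formulation, and independence of the coordinates is never used, which is precisely why the result is applicable to the paper's sampling-without-replacement setting.
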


\noindent Lemma \ref{le: bounded diff} states a bounded difference property for $\sum_t \sigma_{Nt}^2$, allowing Theorem \ref{th: martingale bounded diff} to be used to establish condition \ref{item: convergence of variance}.

\begin{lemma}\label{le: bounded diff}
    Let $f(J) = \sum_{t=1}^T \sigma_{Nt}^2$. Under the conditions of Theorem \ref{th: clt}, there exists $C_2$ such that for all $\ell$, $j_{1:\ell}$, and $j_{\ell}' \notin j_{1:\ell-1}$ the function $f$ satisfies the bounded differences property
    \[|E[f(J)|J_{1:\ell} = j_{1:\ell}] - E[f(J)|J_{1:\ell} = (j_{1:\ell-1}, j_\ell')]| \leq \frac{C_2}{\operatorname{Var} \tau},\]
    where the value of $C_2$ is fixed as $N \rightarrow \infty$.
\end{lemma}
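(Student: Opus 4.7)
The plan is to exploit a Doob--Meyer style identity to collapse the sum $f(J) = \sum_t \sigma_{Nt}^2$ into a single conditional variance, after which a neighborhood-counting argument suffices. Since $\{M_{Nt}^2 - \sum_{s=1}^t \sigma_{Ns}^2\}_t$ is a martingale (its increment $M_{Nt}^2 - M_{N,t-1}^2 - \sigma_{Nt}^2$ has zero conditional expectation), evaluating at times $\ell$ and $T$ gives
\[
\mathbb{E}[f(J) \mid J_{1:\ell}] \;=\; \sum_{s=1}^{\ell} \sigma_{Ns}^2 \;+\; \frac{\operatorname{Var}(\tau \mid J_{1:\ell})}{\operatorname{Var}\tau}.
\]
The first summand is a function of $J_{1:\ell-1}$ only, so it cancels when comparing $J_\ell=j_\ell$ with $J_\ell=j_\ell'$. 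The lemma therefore reduces to showing
\[
\bigl| \operatorname{Var}(\tau \mid J_{1:\ell-1}=j_{1:\ell-1},\,J_\ell=j_\ell) - \operatorname{Var}(\tau \mid J_{1:\ell-1}=j_{1:\ell-1},\,J_\ell=j_\ell') \bigr| \;\le\; C_2
\]
for some constant $C_2$; dividing by $\operatorname{Var}\tau$ then yields the stated bound.

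Next I would expand the conditional variance as $\operatorname{Var}(\tau \mid J_{1:\ell}) = \sum_{i,j} \operatorname{Cov}(v_i, v_j \mid J_{1:\ell})$ and show that only a constant number of summands are actually sensitive to $J_\ell$. By the local structure \eqref{eq: v local}, the conditional covariance $\operatorname{Cov}(v_i, v_j \mid J_{1:\ell})$ depends only on the joint conditional law of $\bigl(\bar\eta_i \cap J_{\ell+1:T},\ \bar\eta_j \cap J_{\ell+1:T}\bigr)$, which under sampling without replacement is determined by the intersection of $\bar\eta_i \cup \bar\eta_j$ with the remaining pool $[N] \setminus J_{1:\ell}$. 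The two conditionings merely swap $j_\ell$ for $j_\ell'$ in this pool, so any pair with $\{j_\ell, j_\ell'\} \cap (\bar\eta_i \cup \bar\eta_j) = \emptyset$ contributes zero to the difference.

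Only pairs with $\bar\eta_i$ or $\bar\eta_j$ meeting $\{j_\ell, j_\ell'\}$ remain. By Assumption \ref{as: degree}, at most $2(1 + d_{\max})$ indices $i$ satisfy $\bar\eta_i \cap \{j_\ell, j_\ell'\} \neq \emptyset$, and by symmetry the same bound holds for $j$. For each such fixed $i$, I would split $\sum_j |\operatorname{Cov}(v_i, v_j \mid J_{1:\ell})|$ according to whether $\bar\eta_i \cap \bar\eta_j$ is empty. The non-empty case involves $O(d_{\max}^2)$ indices $j$, each contributing covariance bounded by $C^2$ via \eqref{eq: v bounded}; the empty case involves $O(N)$ indices, each with covariance $O(d_{\max}^2/N)$ arising solely from the multivariate hypergeometric coupling of the sample, summing again to $O(d_{\max}^2)$. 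Aggregating over the $O(d_{\max})$ relevant indices on either side yields a total bound of order $d_{\max}^3$, independent of $N$.

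The main obstacle is establishing the disjoint-overlap bound $|\operatorname{Cov}(v_i, v_j \mid J_{1:\ell})| = O(d_{\max}^2/N)$ when $\bar\eta_i \cap \bar\eta_j = \emptyset$, since local-dependence arguments alone do not produce the $1/N$ gain; the gain comes from the negative association induced by sampling without replacement. A clean route is to couple the without-replacement continuation to an independent Bernoulli sample with matched inclusion probability $(T-\ell)/(N-\ell)$: the two couplings agree on the indicator vector restricted to $\bar\eta_i \cup \bar\eta_j$ except on an event of probability $O(|\bar\eta_i|\,|\bar\eta_j|/N)$, on which the uniform bound $|v_i v_j| \le C^2$ controls the discrepancy, while under the independent law the covariance vanishes.
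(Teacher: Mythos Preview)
Your approach is correct and genuinely different from the paper's. The Doob--Meyer identity is the key move: it collapses $\mathbb{E}[f(J)\mid J_{1:\ell}]$ to $\sum_{s\le \ell}\sigma_{Ns}^2 + \operatorname{Var}(\tau\mid J_{1:\ell})/\operatorname{Var}\tau$, and since the first summand is $\mathcal{F}_{\ell-1}$-measurable it cancels, reducing the lemma to a single bound on the change in $\operatorname{Var}(\tau\mid J_{1:\ell})$. The paper does not exploit this identity; instead it constructs an explicit coupling $J^{(\ell)}$ that swaps $J_\ell$ for $j_\ell'$, expands each term $\mathbb{E}[Z_{Nt}^2 - (Z_{Nt}^{(\ell)})^2\mid J_{1:t-1},J_{1:t-1}^{(\ell)}]$ via $a^2-b^2=(a+b)(a-b)$, bounds $|a+b|$ by Lemma~\ref{le: bounded increments}, and controls $\mathbb{E}|a-b|$ term-by-term through an auxiliary quantity $\mathrm{diff}_{it}$ and a separate case analysis (Lemma~\ref{le: diff}) on whether $J_t$ or $J_\ell$ lands in $\bar\eta_i$. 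Your route sidesteps both the coupling and Lemma~\ref{le: diff} entirely, at the cost of needing the $O(1/N)$ bound on $|\operatorname{Cov}(v_i,v_j\mid J_{1:\ell})|$ for disjoint $\bar\eta_i,\bar\eta_j$. That bound is correct---conditional on $J_{1:\ell}$ the continuation is still sampling without replacement from a pool of size $N-\ell\ge \rho N$, so the combinatorial ratio argument of Lemmas~\ref{le: Qij}--\ref{le: Qij helper} adapts directly---but your Bernoulli-coupling sketch is somewhat imprecise as stated: the cleanest route is either the multiplicative $1+O(1/N)$ factorization of Lemma~\ref{le: Qij helper}, or the Diaconis--Freedman total-variation bound $O(|\bar\eta_i\cup\bar\eta_j|^2/(N-\ell))$ between hypergeometric and product-Bernoulli restricted to $\bar\eta_i\cup\bar\eta_j$, which transfers all three expectations appearing in the covariance and finishes since the Bernoulli covariance vanishes.
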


\noindent Lemmas \ref{le: bounded increments} and \ref{le: bounded diff} are proven in Section \ref{sec: lemmas 1 and 2}.

\begin{proof}[Proof of Theorem \ref{th: clt}]
    It can be seen that $\sum_{t=1}^\infty Z_{Nt} = \sum_{t=1}^T Z_{Nt} = (\tau - \mathbb{E}\tau)/\stdt$. We show that the conditions \ref{item: bounded limits}, \ref{item: bounded increments}, and \ref{item: convergence of variance} listed in the statement of Theorem \ref{th: martingale clt} hold:
    \begin{enumerate}
    \item Condition \ref{item: bounded limits}, which requires $M_{Nt}$ and $\sum_{r=1}^t \sigma_{Nr}^2$ to converge to finite limits, holds trivially since $M_{Nt} = M_{NT}$ for all $t > T$.  
    \item Lemma \ref{le: bounded increments} together with the assumption that $\operatorname{Var} \tau \geq N^{1/2 + \delta}$ implies condition \ref{item: bounded increments}.
    \item By Lemma \ref{le: bounded diff} and the assumption that $\operatorname{Var} \tau \geq N^{1/2 + \delta}$, we may apply Theorem \ref{th: martingale bounded diff} to the function $f(J) = \sum_{t=1}^T \sigma_{Nt}^2$, with $c_t = C_2 N^{-1/2-\delta}$ for $t \in [T]$. It follows that
    \[\sum_{t=1}^T c_t^2 = C_2^2 T N^{-1-2\delta} \leq C_2^2N^{-2\delta},\]
    where the inequality holds because $T < N$. It hence follows from Theorem \ref{th: martingale bounded diff} for any $\epsilon>0$  that
    \begin{equation}\label{eq: clt 0}
  \mathbb{P}(|f - \mathbb{E}f| \geq \epsilon) \leq 2e^{-2\epsilon^2\Omega(N^{2\delta})} \rightarrow 0,
  \end{equation}
implying convergence in probability of $\sum_{t=1}^\infty \sigma_{Nt}^2$ to its expectation $\mathbb{E}[f]$, which can be shown to equal 1 by the following steps:
\begin{align}
\nonumber \mathbb{E}[f] & = \mathbb{E}\left[ \sum_{t=1}^T \sigma_{Nt}^2\right] \\
&= \mathbb{E}\left[ \sum_{t=1}^T Z_{Nt}^2\right] \label{eq: clt 1}\\
& = \mathbb{E}\left[ \left(\sum_{t=1}^T Z_{Nt}\right)^2 \right] \label{eq: clt 2}\\ 
& = \mathbb{E}[ (\tau - \mathbb{E}\tau)^2] / \operatorname{Var} \tau = 1 \label{eq: clt 3}
\end{align}
where \eqref{eq: clt 1} holds by tower property of expectation, as 
\begin{align*}
\mathbb{E}[ \sigma_{Nt}^2] &= \mathbb{E}\left[ \mathbb{E}[Z_{Nt}^2|J_{1:t-1}]\right] = \mathbb{E}\left[ Z_{Nt}^2\right],
\end{align*}
\eqref{eq: clt 2} holds as the martingale differences are uncorrelated; and \eqref{eq: clt 3} holds as $\sum_{t=1}^T Z_{Nt} = (\tau - \mathbb{E}\tau)/\stdt$. As \eqref{eq: clt 3} and \eqref{eq: clt 0} together imply that $\sum_{t=1}^T \sigma_{Nt}^2$ converges in probability to 1, this establishes Condition \ref{item: convergence of variance}.
\end{enumerate}
As \ref{item: bounded limits}, \ref{item: bounded increments}, and \ref{item: convergence of variance} have been shown to hold, Theorem \ref{th: martingale clt} implies that $M_{NT} - M_{N0} = (\tau - \mathbb{E}\tau)/\stdt$ is normally distributed.
\end{proof}

\subsection{Proof of Lemmas \ref{le: bounded increments} and \ref{le: bounded diff}} \label{sec: lemmas 1 and 2}

To prove Lemmas \ref{le: bounded increments} and \ref{le: bounded diff}, we will use the following identity, which states that changing the set of treated units does not affect the conditional expectation of $v_i$ is the change is retricted to units outside of $\bar{\eta}_i$.

\begin{lemma}\label{le: basic prob}
 If $(j_{1:t} \Delta j_{1:t}') \cap \bar{\eta}_i = \emptyset$ then
 \begin{equation}\label{eq: basic prob}
 \mathbb{E}[v_i(J)|J_{1:t} = j_{1:t}] = \mathbb{E}[v_i(J)| J_{1:t} = j_{1:t}'] 
 \end{equation}
\end{lemma}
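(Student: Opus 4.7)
The plan is to use the locality condition \eqref{eq: local 2} to reduce $v_i(J)$ to a function of the set $\bar{\eta}_i \cap J$, and then to show that the conditional distribution of this set is the same under the two conditionings, exploiting the exchangeability of sampling without replacement.

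First I would write $\bar{\eta}_i \cap J = (\bar{\eta}_i \cap J_{1:t}) \cup (\bar{\eta}_i \cap J_{t+1:T})$. Under the hypothesis $(j_{1:t}\, \Delta\, j_{1:t}') \cap \bar{\eta}_i = \emptyset$, the sets $\{j_1,\ldots,j_t\}$ and $\{j_1',\ldots,j_t'\}$ have the same intersection $A := \bar{\eta}_i \cap \{j_1,\ldots,j_t\} = \bar{\eta}_i \cap \{j_1',\ldots,j_t'\}$, so the first piece is deterministic and identical under both conditionings.

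Next I would handle the second piece. Conditional on $J_{1:t} = j_{1:t}$, the remaining draws $J_{t+1:T}$ form a uniform ordered sample without replacement from $R := [N] \setminus \{j_1,\ldots,j_t\}$, so that the set $\{J_{t+1},\ldots,J_T\}$ is uniform over $(T-t)$-subsets of $R$; likewise for $R' := [N] \setminus \{j_1',\ldots,j_t'\}$. The hypothesis gives $R \cap \bar{\eta}_i = R' \cap \bar{\eta}_i =: B$ and $|R| = |R'| = N - t$. Partitioning $R = B \cup (R\setminus B)$ and $R' = B \cup (R' \setminus B)$, with $|R \setminus B| = |R' \setminus B|$, the distribution of the intersection of a uniform $(T-t)$-subset of $R$ (or $R'$) with $B$ is hypergeometric in size and, conditional on that size, uniform over $k$-subsets of $B$. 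Hence $\bar{\eta}_i \cap J_{t+1:T}$ has the same conditional law under both conditionings.

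Combining the two pieces, the conditional distribution of $\bar{\eta}_i \cap J$ given $J_{1:t} = j_{1:t}$ equals that given $J_{1:t} = j_{1:t}'$. Since $v_i(J) = v_i(\bar{\eta}_i \cap J)$ by \eqref{eq: local 2}, the two conditional expectations in \eqref{eq: basic prob} are equal, completing the argument. The only subtle point is the symmetry step for the tail of the sample, which is essentially the observation that a uniform $(T-t)$-subset of a set depends, via its intersection with $\bar{\eta}_i$, only on $|R|$ and $R \cap \bar{\eta}_i$; I do not anticipate any real obstacle beyond stating this carefully.
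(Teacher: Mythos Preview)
Your proposal is correct and follows essentially the same approach as the paper: both use the locality condition \eqref{eq: local 2} to reduce to the set $\bar{\eta}_i \cap J$, split this into the deterministic piece $\bar{\eta}_i \cap J_{1:t}$ (identical under both conditionings by the hypothesis) and the random piece $\bar{\eta}_i \cap J_{t+1:T}$, and then argue that the conditional law of the latter is the same. The only difference is presentational: the paper writes out the explicit binomial-coefficient formula for $P(J_{t+1:T} \cap \bar{\eta}_i = S_1 \mid J_{1:t} = j_{1:t})$ and observes it depends only on $|\bar{\eta}_i \setminus j_{1:t}|$ and $|S_1|$, while you phrase the same fact via the hypergeometric/symmetry description of a uniform $(T-t)$-subset.
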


\begin{proof}[Proof of Lemma \ref{le: basic prob}]
Given $j_{1:t}$ and $j_{1:t}'$, let their intersection with $\bar{\eta}_i$ be denoted respectively by $S_0$ and $S_0'$,
\begin{align*}
S_0 & = \bar{\eta}_i \cap j_{1:t} & S_0' & = \bar{\eta}_i \cap j_{1:t}' 
\end{align*}
Since $v_i$ depends only on $J \cap \bar{\eta}_i$, the claim \eqref{eq: basic prob} follows from
\begin{align}
\nonumber \mathbb{E}[v_i(J)| J_{1:t} = j_{1:t}] & = \sum_{S \subseteq \bar{\eta}_i} v_i(S)\cdot P(J \cap \bar{\eta}_i = S | J_{1:t} = j_{1:t}) \\
& = \sum_{S_1 \subseteq \bar{\eta}_i/ j_{1:t}} v_i(S_0 \cup S_1) P(J_{t+1:T} \cap \bar{\eta}_i = S_1| J_{1:t} = j_{1:t}) \label{eq: basic prob decomp} \\
 & = \sum_{S_1 \subseteq \bar{\eta}_i/j_{1:t}} v_i(S_0 \cup S_1) \frac{{{N - t - |\bar{\eta}_i/j_{1:t}|} \choose {T - t - |S / j_{1:t}|}}}{{{N - t} \choose {T - t}}}  \label{eq: basic prob counting} \\
  & = \sum_{S_1 \subseteq \bar{\eta}_i / j_{1:t}'} v_i(S_0' \cup S_1)  \frac{{{N - t - |\bar{\eta}_i/j_{1:t}'|} \choose {T - t - |S / j_{1:t}'|}}}{{{N - t} \choose {T - t}}}, \label{eq: basic prob substitute} \\
  \nonumber & = \sum_{S \subseteq \bar{\eta}_i} v_i(S)\cdot P(J \cap \bar{\eta}_i = S | J_{1:t} = j_{1:t}') \\
  \nonumber & = \mathbb{E}[v_i(J)|J_{1:t} = j_{1:t}'] 
\end{align}
where \eqref{eq: basic prob decomp} decomposes $S$ into $S_0 \cup S_1$, implying $S_1 = S / S_0$; \eqref{eq: basic prob counting} holds by counting arguments\footnote{The numerator ${{N - t - |\bar{\eta}_i/j_{1:t}|} \choose {T - t - |S / j_{1:t}|}}$ counts the ways to assign the remaining $T - t$ draws such that $S/j_{1:t}$ is included while $\bar{\eta}_i / S$ is excluded.}, and \eqref{eq: basic prob substitute} holds because $S_0 = S_0'$ and because $S/ j_{1:t} = S/ j_{1:t}'$ for all $S \subseteq \bar{\eta}_i$, both of which follow from the assumption that $(j_{1:t}\, \Delta j_{1:t}') \cap \bar{\eta}_i = \emptyset$. 
\end{proof}





\subsubsection{Proof of Lemma \ref{le: bounded increments}} The proof of Lemma \ref{le: bounded increments} is given below:

\begin{proof}[Proof of Lemma \ref{le: bounded increments}] 
We first show that if $J_t \notin \bar{\eta}_i$, then 
\begin{equation}\label{eq: bounded increments bound}
\mathbb{E}[v_i|J_{1:t} ] - \mathbb{E}[v_i|J_{1:t-1} ] \leq \frac{2C|\bar{\eta}_i|}{N - t}
\end{equation}
To show \eqref{eq: bounded increments bound}, we observe that
    \begin{align}
    \nonumber & \big|\ \mathbb{E}[v_i|J_{1:t}=j_{1:t} ] - \mathbb{E}[v_i|J_{1:t-1} = j_{1:t-1}]\ \big| \\
     & = \left|\ \mathbb{E}[v_i|J_{1:t}=j_{1:t}] - \sum_{j_t' \notin j_{1:t-1}} \frac{\mathbb{E}[v_i|J_{1:t} = (j_{1:t-1}, j_t')]}{N - t + 1}  \ \right| \label{eq: bounded increments expectation}\\
    \nonumber & \leq \sum_{j_t' \notin j_{1:t-1}} \left| \ \frac{ \mathbb{E}[v_i|J_{1:t}=j_{1:t}] - \mathbb{E}[v_i|J_{1:t} = (j_{1:t-1}, j_t')] } {N - t + 1} \ \right|  \\
      \nonumber & = \sum_{j_t \notin j_{1:t-1}} \Big(1\{j_t' \notin \bar{\eta}_i\} + 1\{j_t' \in \bar{\eta}_i\}\Big) \left|\ \frac{ \mathbb{E}[v_i|J_{1:t}=j_{1:t}] - \mathbb{E}[v_i|J_{1:t} = (j_{1:t-1}, j_t')] } {N - t + 1} \ \right|\\
    & \leq \frac{2C|\bar{\eta}_i|}{N-t}  \label{eq: bounded increments final}
    \end{align}
    where \eqref{eq: bounded increments expectation} is the law of total expectation, as $j_t'$ is uniformly distributed on the set $[N]/j_{1:t-1}$;  and \eqref{eq: bounded increments final} involves the contribution of two factors:
    \begin{enumerate} 
        \item For all $j_t' \notin \bar{\eta}_i$ it holds that $(j_t, j_t') \notin \bar{\eta}_i$ and hence that
        \[\big(\, j_{1:t}\, \Delta\, (j_{1:t-1},j_t) \, \big) \cap \bar{\eta}_i = \emptyset\] 
        therefore by Lemma \ref{le: basic prob} it follows that
\[\sum_{j_t'\notin j_{1:t-1}} 1\{j_t' \notin \bar{\eta}_i\} \ \Big|\, \mathbb{E}[v_i(J)|J_{1:t}=j_{1:t}] - \mathbb{E}[v_i(J)| J_{1:t-1} = (j_{1:t-1}, j_{t}')]\, \Big| = 0,\]
\item For all $j_t' \in \bar{\eta}_i$ we may apply $|v_i| \leq C$ (by Assumption \ref{as: tau clt}), implying that 
    \[\sum_{j_t'\notin j_{1:t-1}} 1\{j_t' \in \bar{\eta}_i\} \ \left|\, \frac{\mathbb{E}[v_i(J)|J_{1:t} = j_{1:t}] - \mathbb{E}[v_i(J)| J_{1:t} = (j_{1:t-1}, j_{t}')]}{N - t +1}\,\right| \leq \frac{2C |\bar{\eta}_i|}{N - t}.\]
\end{enumerate}
Having shown \eqref{eq: bounded increments bound}, we observe that
\begin{align}
\nonumber |Z_t| & = \left| \frac{\mathbb{E}[ \tau| J_{1:t}] - \mathbb{E}[\tau|J_{1:t-1}] }{\stdt} \right| \\
& \leq  \sum_{i=1}^N \left| \frac{ \mathbb{E}[ v_i|J_{1:t}] - \mathbb{E}[v_i|J_{1:t-1}]}{\stdt} \right| \label{eq: bounded increments tau} \\
\nonumber & = \sum_{i=1}^N \left(1\{J_t \notin \bar{\eta}_i\} + 1\{J_t \in \bar{\eta}_i\}\right) \left|\frac{ \mathbb{E}[ v_i|J_{1:t}] - \mathbb{E}[v_i|J_{1:t-1}]  } {\stdt} \right|   \\
& \leq \frac{1}{\stdt} \left(\frac{2NCd_{\max}}{N-t} + 2Cd_{\max}\right) \label{eq: bounded increments main} \\
& \leq \frac{C_1}{\stdt} \label{eq: bounded increments var}
\end{align}
where \eqref{eq: bounded increments tau} holds because $\tau = \sum_{i=1}^N v_i$; \eqref{eq: bounded increments main} holds because \eqref{eq: bounded increments bound} bounds the difference $\mathbb{E}[ v_i|J_{1:t}] - \mathbb{E}[v_i|J_{1:t-1}]$ for all units $i$ such that $J_t \notin \bar{\eta}_i$, while Assumption \ref{as: tau clt} enforces $|v_i|\leq C$, which bounds the difference for the remaining units which satisfy $J_t \in \bar{\eta}_i$ and hence belong to the set $\bar{\Gamma}_{J_t}$ (whose cardinality is bounded by Assumption \ref{as: degree}); and \eqref{eq: bounded increments var} holds for $C_1 = 2Cd_{\max}/(1-\rho) + 2Cd_{\max}$ because $N/(N-t) \geq 1/(1-\rho)$ by Assumption \ref{as: random sampling}.

\end{proof}

\subsubsection{Proof of Lemma \ref{le: bounded diff}} 
Before proving this lemma, we introduce $J^{(\ell)}$, a coupled version of $J$ that is generated as follows. Given $J_{1:\ell}$, let $j_\ell'$ be generated by sampling without replacement from $[N]/J_{1:\ell-1}$, and let $h:[N]\mapsto[N]$ denote the mapping that swaps $J_\ell$ for $j_\ell'$ and vice versa, otherwise leaving the argument unchanged:
    \[ h(j) = \begin{cases} j_\ell' & \textup{if } j = J_\ell \\
    J_\ell & \textup{if } j = j_\ell' \\
    j & \textup{otherwise} \end{cases} \]
and let $J^{(\ell)} = h(J)$ denote the vector in $\mathbb{R}^T$ that results when $h$ is applied elementwise to $J$. It can be seen that $J$ and $J^{(\ell)}$ satisfy the following:
\begin{enumerate}
	\item Conditioned on $J_{1:\ell}$ and $j_\ell'$, $J_{\ell+1:T}^{(\ell)}$ has the same distribution as $J_{\ell+1:T}$ would have if $J_\ell$ was changed to $j_\ell'$:
    \[P(J_{\ell+1:T}^{(\ell)} = j_{\ell+1:T}|J_{1:\ell} = j_{1:\ell}, j_\ell') = P(J_{\ell+1:T} = j_{\ell+1:T}| J_{1:\ell} = (j_{1:\ell-1}, j_\ell'))\]
    To prove this, let $C_P = \frac{(N-\ell)!}{(N-T)!}$ denote the number of ways to draw $T-\ell$ items from a population of size $N-\ell$, and observe that
    \begin{align}
     \nonumber P(J^{(\ell)}_{\ell+1:T} = j_{\ell+1:T}|J_{1:\ell}=j_{1:\ell}, j_\ell') & = P(J_{\ell+1:T} = h^{-1}(j_{\ell+1:T})|J_{1:\ell} = j_{1:\ell}, j_\ell')  \\ 
    \nonumber & = \frac{1}{C_P}  1\{h^{-1}(j_{\ell+1:T}) \cap j_{1:\ell} = \emptyset\} \\ 
     \nonumber & = \frac{1}{C_P} 1\{j_{\ell+1:T} \cap h(j_{1:\ell}) = \emptyset\} \\
     \nonumber & = P(J_{\ell+1:T} = j_{\ell+1:T}|J_{1:\ell} = h(j_{1:\ell})) \\ 
\nonumber & = P(J_{\ell+1:T} = j_{\ell+1:T}| J_{1:\ell} = (j_{1:\ell-1}, j_\ell'))
    \end{align}
    where the first line holds because $h$ is invertible; and the second and fourth lines hold because because $J_{\ell+1:T}$ conditioned on $J_{1:\ell}$ is equally probable to take any value in its support (which has cardinality $C_P$ for all $J_{1:\ell}$).
    \item For all $t$, the symmetric difference $J_{1:t} \Delta J_{1:t}^{(\ell)}$ equals either $\{J_\ell, J_\ell^{(\ell)}\}$ or $\emptyset$
    \item Conditioning on $J_{1:t}^{(\ell)}$ provides no additional information regarding $J_{t+1:T}$ given $J_{1:t}$ for $t \geq \ell$. In other words, it holds for $ t \geq \ell$ and any function $f$ of $J$ that
\begin{align*}
\E[f(J) | J_{1:t} ] & = \E[f(J) | J_{1:t}, J_{1:t}^{(\ell)}] \\
\E[f(J^{(\ell)}) | J_{1:t}^{(\ell)} ] & = \E[f(J^{(\ell)}) | J_{1:t}^{(\ell)}, J_{1:t}]. 
\end{align*}
\end{enumerate}
\noindent Given $J^{(\ell)}$, let $v_i^{(\ell)}$, $\tau^{(\ell)}$, $M_{Nt}^{(\ell)}$, and $Z_{Nt}^{(\ell)}$ be defined as follows
\begin{align*}
v_i^{(\ell)} & = v_i(J^{(\ell)}) & \tau^{(\ell)} &= \sum_{i=1}^N v_i^{(\ell)} \\
M_{Nt}^{(\ell)} & = \E[\tau^{(\ell)}| J_{1:t}^{(\ell)}] & Z_{Nt}^{(\ell)} & =  M_{Nt}^{(\ell)} - M_{N,t-1}^{(\ell)},
\end{align*}
and let $\diff_{it}$ be given by
\begin{equation*}
\diff_{it} = \E[v_i|J_{1:t}] - \E[v_i|J_{1:t-1}] - \E[v_i^{(\ell)}|J_{1:t}^{(\ell)}] + \E[v_i^{(\ell)}|J_{1:t-1}^{(\ell)}]
\end{equation*}
Lemma \ref{le: diff}, which is proven in Section \ref{sec: le diff}, gives a bound on the conditional expectation of $\diff_{it}$. 

\begin{lemma}\label{le: diff}
Let the conditions of Theorem \ref{th: clt} hold. There exists $C_3$ such that for all $i\in [N]$, $t \geq \ell+1$, and $(j_{1:t-1}, j_{1:t-1}')$ in the joint support of $(J_{1:t-1}, J_{1:t-1}^{(\ell)})$ conditioned on $(J_{1:\ell}, J_{1:\ell}')$, it holds that if $\bar{\eta}_i \cap \{j_\ell, j_\ell'\} = \emptyset$, then
\[\E\Big[ |\diff_{it}|\, \Big| J_{1:t-1} = j_{1:t-1}, J_{1:t-1}^{(\ell)} = j_{1:t-1}' \, \Big] = 0\]
and otherwise
\[ \E\Big[ |\diff_{it}|\, \Big| J_{1:t-1} = j_{1:t-1}, J_{1:t-1}^{(\ell)} = j_{1:t-1}' \, \Big]  
\leq \frac{C_3 d^{\max}}{N-t}
\]
\end{lemma}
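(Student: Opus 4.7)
My plan is to expand $\diff_{it}$ as a difference of differences of a single conditional-expectation function, apply Lemma \ref{le: basic prob} together with the coupling structure to identify the case in which everything vanishes, and handle the remaining case by a careful average over $J_t$ that reuses the increment bound from the proof of Lemma \ref{le: bounded increments}.

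First I would set $g(a_{1:s}) := \E[v_i(J)\mid J_{1:s}=a_{1:s}]$. Because $J^{(\ell)}$ has the same marginal law as $J$ (a consequence of property 1 of the coupling), the function representing $\E[v_i^{(\ell)}\mid J^{(\ell)}_{1:s}]$ is also $g$, so
\[\diff_{it} = \bigl[g(J_{1:t}) - g(J^{(\ell)}_{1:t})\bigr] - \bigl[g(J_{1:t-1}) - g(J^{(\ell)}_{1:t-1})\bigr].\]
The key coupling observation is that for any $s \geq \ell$ the set-level symmetric difference obeys $J_{1:s}\,\Delta\,J^{(\ell)}_{1:s} \subseteq \{j_\ell,j_\ell'\}$, with equality when $j_\ell'\notin J_{\ell+1:s}$ and $\emptyset$ otherwise, because $J^{(\ell)}$ is obtained from $J$ by swapping $J_\ell=j_\ell$ with $j_\ell'$. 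Hence when $\bar{\eta}_i\cap\{j_\ell,j_\ell'\}=\emptyset$, Lemma \ref{le: basic prob} makes both bracketed differences vanish, so $\diff_{it}=0$ almost surely and the first conclusion follows.

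For the second conclusion, I condition on $(J_{1:t-1}, J^{(\ell)}_{1:t-1}) = (j_{1:t-1}, j_{1:t-1}')$. If $j_\ell'\in j_{\ell+1:t-1}$, both symmetric differences are empty and $\diff_{it}=0$. Otherwise, set $A := g(j_{1:t-1}) - g(j_{1:t-1}')$ with $|A|\leq 2C$, and split the uniform draw $J_t\in[N]/j_{1:t-1}$ into two sub-cases. In the single outcome $J_t=j_\ell'$, the coupling yields $J^{(\ell)}_t=j_\ell$, the sets $J_{1:t}$ and $J^{(\ell)}_{1:t}$ coincide so their $g$-difference is zero, and $\diff_{it}=-A$ with magnitude $\leq 2C$. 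In any of the $N-t$ remaining outcomes $J_t=j \neq j_\ell'$, one has $J^{(\ell)}_t=j$ and
\[\diff_{it}=\bigl[g(j_{1:t-1},j)-g(j_{1:t-1})\bigr]-\bigl[g(j_{1:t-1}',j)-g(j_{1:t-1}')\bigr],\]
a difference of two martingale increments of the exact form controlled by \eqref{eq: bounded increments bound} in the proof of Lemma \ref{le: bounded increments}: each bracket has magnitude at most $2C|\bar{\eta}_i|/(N-t)$ when $j\notin\bar{\eta}_i$, and at most $2C$ in general. Counting, at most $|\bar{\eta}_i|\leq d_{\max}+1$ of the summed $j$'s lie in $\bar{\eta}_i$ (each contributing $\leq 4C$) while the rest contribute $\leq 4C|\bar{\eta}_i|/(N-t)$ each. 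Adding the $|A|\leq 2C$ piece and dividing by $N-t+1$ yields $\E[|\diff_{it}|\mid\cdot]=O(Cd_{\max}/(N-t))$, which is the stated bound for a suitable universal $C_3$.

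The main obstacle is the coupling bookkeeping: correctly parametrizing the joint support of $(J_{1:t-1},J^{(\ell)}_{1:t-1})$, expressing $J^{(\ell)}_t$ as a deterministic function of $J_t$ in each sub-case, and tracking when the set of treated units at time $t$ coincides with versus differs from its swapped counterpart. Once this structure is in place, the argument reduces to combining Lemma \ref{le: basic prob} with the increment bound \eqref{eq: bounded increments bound} already established in the proof of Lemma \ref{le: bounded increments}, so no new probabilistic machinery is required.
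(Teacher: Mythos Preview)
Your proposal is correct and follows essentially the same route as the paper's proof: both arguments use Lemma \ref{le: basic prob} to dispose of the case $\bar\eta_i\cap\{j_\ell,j_\ell'\}=\emptyset$, then average over the uniform draw $J_t$, splitting according to whether the new draw (or its coupled image) lands in $\bar\eta_i$ and invoking the increment bound \eqref{eq: bounded increments bound} on the good draws and the crude $2C$ bound on the $O(d_{\max})$ bad ones. Your case split (singling out $J_t=j_\ell'$ and the sub-case $j_\ell'\in j_{\ell+1:t-1}$) is a slightly different bookkeeping of the same set $\mathcal{J}_i$ the paper uses, but the counting and the resulting $O(Cd_{\max}/(N-t))$ bound are identical.
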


\begin{proof}[Proof of Lemma \ref{le: bounded diff}]
Given $j_{1:\ell}$ and $j_\ell' \notin j_{1:\ell-1}$, let $j_{1:\ell}' = (j_{1:\ell-1}, j_\ell')$ denote the vector that replaces the last entry of $j_{1:\ell}$ with $j_\ell'$. Let $\mathcal{I}$ denote the set of units whose neighborhoods include $j_\ell$ or $j_\ell'$,
\[ \mathcal{I} = \{i \in [N]: \bar{\eta}_i \cap (j_\ell, j_\ell') \neq \emptyset\}\]
The lemma follows from the following chain of equations
\begin{align}
\nonumber & \Bigg| \E\left[ \sum_{t=1}^T \E[Z_{Nt}^2| J_{1:t-1}]\ \bigg| J_{1:\ell} = j_{1:\ell}\right] - 
\E\left[ \sum_{t=1}^T \E[Z_{Nt}^2| J_{1:t-1}]\ \bigg| J_{1:\ell} = j_{1:\ell}' \right] \Bigg|\\
& = \Bigg| \E\left[ \sum_{t=\ell+1}^T \E[Z_{Nt}^2| J_{1:t-1}]\ \bigg| J_{1:\ell} = j_{1:\ell}\right] - 
\E\left[ \sum_{t=\ell+1}^T \E[Z_{Nt}^2| J_{1:t-1}]\ \bigg| J_{1:\ell} = j_{1:\ell}' \right] \Bigg| \label{eq: t before ell}\\
& = \Bigg| \E\left[ \sum_{t=\ell+1}^T \E[Z_{Nt}^2| J_{1:t-1}]\ \bigg| J_{1:\ell} = j_{1:\ell}\right] - 
\E\left[ \sum_{t=\ell+1}^T \E[(Z_{Nt}^{(\ell)})^2| J_{1:t-1}^{(\ell)}]\ \bigg| J_{1:\ell}^{(\ell)} = j_{1:\ell}' \right] \Bigg| \label{eq: zprime} \\
 & = \Bigg| \sum_{t=\ell+1}^T \E\left[  \E\Big[Z_{Nt}^2 - (Z_{Nt}^{(\ell)})^2\ \Big| J_{1:t-1}, J_{1:t-1}^{(\ell)}\Big]\ \bigg| J_{1:\ell} = j_{1:\ell}, J_{1:\ell}^{(\ell)}=j_{1:\ell}'\right] \Bigg| \label{eq: J and Jprime}\\
& = \Bigg| \sum_{t=\ell+1}^T \E\left[  \E\Big[(Z_{Nt} + Z_{Nt}^{(\ell)})(Z_{Nt} - Z_{Nt}^{(\ell)})\ \Big| J_{1:t-1}, J_{1:t-1}^{(\ell)}\Big]\ \bigg| J_{1:\ell} = j_{1:\ell}, J_{1:\ell}^{(\ell)}=j_{1:\ell}' \right] \Bigg| \label{eq: a^2 - b^2} \\
 & \leq  \frac{2C_1}{\stdt} \sum_{t=\ell+1}^T \E\left[  \E\Big[\ |Z_{Nt} - Z_{Nt}^{(\ell)}|\ \Big| J_{1:t-1}, J_{1:t-1}^{(\ell)}\Big]\ \bigg| J_{1:\ell} = j_{1:\ell}, J_{1:\ell}^{(\ell)}=j_{1:\ell}' \right] \label{eq: factor out a+b} \\
& = \frac{2C_1}{\stdt}\sum_{t=\ell+1}^T \E\Bigg\{  \E\Bigg[\ 
\bigg| \E[\tau|J_{1:t}] - \E[\tau|J_{1:t-1}] - \E[\tau^{(\ell)}|J_{1:t}^{(\ell)}] + \E[\tau^{(\ell)}|J_{1:t-1}^{(\ell)}] \bigg| \label{eq: substitute tau} \\
\nonumber & \hskip3cm \Bigg| J_{1:t-1}, J_{1:t-1}^{(\ell)}\Bigg]\ \Bigg| J_{1:\ell} = j_{1:\ell}, J_{1:\ell}^{(\ell)}=j_{1:\ell}'\Bigg\} \\
& \leq \frac{2C_1}{\operatorname{Var} \tau}\sum_{t=\ell+1}^T \sum_{i=1}^N \E\Bigg\{  \E\Bigg[\ 
 \bigg|  \Big( \E[v_i|J_{1:t}] - \E[v_i|J_{1:t-1}] - \E[v_i^{(\ell)}|J_{1:t}^{(\ell)}] + \E[v_i^{(\ell)}|J_{1:t-1}^{(\ell)}] \Big) \bigg|  \label{eq: substitute vi}\\
\nonumber & \hskip3cm \Bigg| J_{1:t-1}, J_{1:t-1}^{(\ell)}\Bigg]\ \Bigg| J_{1:\ell} = j_{1:\ell}, J_{1:\ell}^{(\ell)} = j_{1:\ell}'  \Bigg\} \\
 & = \frac{2C_1}{\operatorname{Var} \tau}\sum_{t=\ell+1}^T \sum_{i=1}^N \E\Big\{  \E\Big[\ \big|   \diff_{it}  \big|  \Big| J_{1:t-1}, J_{1:t-1}^{(\ell)}\Big]\ \Big| J_{1:\ell} = j_{1:\ell}, J_{1:\ell}^{(\ell)} = j_{1:\ell}'  \Big\} \label{eq: substitute diff} \\
 & \leq \frac{2C_1}{\operatorname{Var} \tau} \sum_{i \in \mathcal{I}} \sum_{t=\ell+1}^T \frac{C_3 d^{\max}}{N-t} \label{eq: apply bounded diff helper lemma}
 \\
 & \leq \frac{C_2}{\operatorname{Var} \tau}\label{eq: bounded diff final}
\end{align}
where \eqref{eq: t before ell} holds since for $t \leq \ell+1$
\[ \E\Big\{ \E[Z_{Nt}^2|J_{1:t-1}] \ \Big| J_{1:\ell} = j_{1:\ell} \Big\} = \E[ Z_{Nt}^2|J_{1:t-1} = j_{1:t-1}],\]
and hence for $t \leq \ell$ it holds that
\[ \E\Big\{ \E[Z_{Nt}^2|J_{1:t-1}] \ \Big| J_{1:\ell} = j_{1:\ell} \Big\} = \E\Big\{ \E[Z_{Nt}^2|J_{1:t-1}] \ \Big| J_{1:\ell} = j_{1:\ell}'\Big\},\]
as $j_{1:t-1} = j_{1:t-1}'$ if $t \leq \ell$; \eqref{eq: zprime} holds as the distribution of $Z_{Nt}^{(\ell)}$ conditioned on $J_{1:\ell}^{(\ell)} = j_{1:\ell}'$ is equal to that of $Z_{Nt}$ conditioned on $J_{1:\ell} = j_{1:\ell}'$; \eqref{eq: J and Jprime} holds since $J^{(\ell)}$ is a deterministic transformation of $J$ given $j_\ell$, so that conditioning on $(J, J^{(\ell)})$ is equivalent to conditioning on $J$ for any function of $J$; \eqref{eq: a^2 - b^2} holds by the identity $(a^2 - b^2) = (a + b)(a-b)$;  \eqref{eq: factor out a+b} holds as $|Z_{Nt} + Z_{Nt}^{(\ell)}| \leq 2C_1/\stdt$ by Lemma \ref{le: bounded increments}; \eqref{eq: substitute tau} - \eqref{eq: substitute diff} hold by definitions of $Z_{Nt}$ and $Z_{Nt}^{(\ell)}$, $\tau$ and $\tau^{(\ell)}$, and $\diff_{it}$ respectively; \eqref{eq: apply bounded diff helper lemma} holds by Lemma \ref{le: diff}; and \eqref{eq: bounded diff final} holds for $C_2$ chosen appropriately because $|\mathcal{I}| \leq |d^{\max}|$ and $T \leq \rho N$. This proves the lemma. 

\end{proof}

\subsection{Proof of Lemma \ref{le: diff}} \label{sec: le diff}

\begin{proof}[Proof of Lemma \ref{le: diff}]

We first establish the following identities
\begin{enumerate}[label=\text{(I\arabic*)}]
    \item If $\bar{\eta}_i \cap \{J_\ell, J_\ell^{(\ell)}\} = \emptyset$, then $\diff_{it} = 0$. \label{item: j_ell}
    \item If $\bar{\eta}_i \cap \{J_\ell, J_\ell^{(\ell)}\} \neq \emptyset$ and $\bar{\eta}_i \cap \{J_t, J_t^{(\ell)}\} = \emptyset$, then $|\diff_{it}| \leq \frac{4Cd^{\max}}{N-t}$ \label{item: j_t}
    \item If $\bar{\eta}_i \cap \{J_\ell, J_\ell^{(\ell)}\} \neq \emptyset$ and $\bar{\eta}_i \cap \{J_t, J_t^{(\ell)}\} \neq \emptyset$, then $|\diff_{it}| \leq 4C$ \label{item: j_ell_t}
\end{enumerate}
by the following arguments:
\begin{enumerate}
    \item To show \ref{item: j_ell}, we observe that $J_{1:t}\, \Delta J_{1:t}^{(\ell)} \subseteq \{J_\ell, J_\ell^{(\ell)}\}$, and hence if $\bar{\eta}_i \cap \{J_\ell, J_\ell^{(\ell)}\} = \emptyset$ then by Lemma \ref{le: basic prob} it holds for $t \geq \ell$ that
\begin{align*} 
\E[v_i|J_{1:t} ] & = \E[v_i^{(\ell)}|J_{1:t}^{(\ell)}] \\
 \E[v_i|J_{1:t-1}] & = \E[v_i^{(\ell)}|J_{1:t-1}^{(\ell)} ]
\end{align*}
implying $\diff_{it} = 0$. 

    \item To show \ref{item: j_t}, we apply \eqref{eq: bounded increments bound} which implies that
\begin{align*} 
\Big|\,\E[v_i|J_{1:t}] - \E[v_i|J_{1:t-1}] \, \Big| \leq \frac{2Cd^{\max}}{N-t} \\
\Big|\, \E[v_i^{(\ell)}|J_{1:t}^{(\ell)}] - \E[v_i^{(\ell)}|J_{1:t-1}^{(\ell)}] \, \Big| \leq \frac{2Cd^{\max}}{N-t}
\end{align*}
implying that $|\diff_{it}| \leq \frac{4Cd^{\max}}{N-t}  $

\item \ref{item: j_ell_t} holds as $|v_i| \leq C$.
\end{enumerate}

\noindent Having shown \ref{item: j_ell}-\ref{item: j_ell_t}, we proceed as follows. For $t > \ell$, given $(j_{1:t}, j_{1:t-1}')$ in the support of $(J_{1:t},\, J^{(\ell)}_{1:t-1})$, let $j_t^*$ be given by
\[ j_t^* = \begin{cases} j_t & \textup{ if } j_t \neq j_\ell' \\ j_\ell & \textup{ if } j_t = j_\ell'\end{cases}\]
so that if $J_{1:t} = j_{1:t}$ and $J_{1:t-1}^{(\ell)} = j_{1:t-1}'$, then $J_{1:t}^{(\ell)} = (j_{1:t-1}', j_t^*)$. Let $\mathcal{J}_i$ denote the set of values for $j_t$ such that $\bar{\eta}_i$ includes either $j_t$ or $j_t^*$,
\[ \mathcal{J}_i = \{ j_t \notin j_{1:t-1}: \bar{\eta}_i \cap \{j_t, j_t^*\} \neq \emptyset \}.\]
It can be seen that $|\mathcal{J}_i| \leq |\bar{\eta}_i|+1$.  

The claim of the Lemma may be shown by
\begin{align}
\nonumber & \E\Big[\, |\diff_{it}|\, \Big| J_{1:t-1} = j_{1:t-1}, J_{1:t-1}^{(\ell)} = j_{1:t-1}' \Big]  \\
\nonumber & = \E\bigg\{\, \E\Big[\, |\diff_{it}|\, \Big| J_{1:t}, J_{1:t}^{(\ell)} \Big] \ \bigg|\, J_{1:t-1} = j_{1:t-1}, J_{1:t-1}^{(\ell)} = j_{1:t-1}' \bigg\} \\
& = \sum_{j_t \notin j_{1:t-1}} \frac{\E\Big[ |\diff_{it}| \Big| J_{1:t} = j_{1:t}, J_{1:t}^{(\ell)} = (j_{1:t-1}', j_t^*)\Big]}{N-t+2} \label{eq: bounded diff expectation}\\
\nonumber & = \sum_{j_t \notin j_{1:t-1}} \Big(1\{j_t \in \mathcal{J}_i\} + 1\{j_t \notin \mathcal{J}_i\}\Big) \frac{\E\Big[ |\diff_{it}| \Big| J_{1:t} = j_{1:t}, J_{1:t}^{(\ell)} = (j_{1:t-1}', j_t^*)\Big]}{N-t+2} \\
& \begin{cases} = 0 & \textup{ if } \bar{\eta}_i \cap (j_\ell, j_\ell')) = \emptyset \\
\leq (C_3d^{\max})/(N-t) & \textup{ else } \end{cases} \label{eq: le diff}
\end{align}
where \eqref{eq: bounded diff expectation} is the law of total expectation, as $j_t$ is uniformly distributed from the set $[N] / j_{1:t-1}$ and $j_t^*$ is deterministically related to $j_t$; and where \eqref{eq: le diff} holds for $C_3$ chosen appropriately, involving the contribution of three factors:
\begin{enumerate}
    \item If $\bar{\eta}_i \cap (j_\ell, j_\ell') = \emptyset$, then $\diff_{it} = 0$ by \ref{item: j_ell} for any value of $j_t$. Taking expectations and summing over $j_t$ yields
    \[ \sum_{j_t \notin j_{1:t-1}} \frac{\E\big[\ |\diff_{it}|\ \big|J_{1:t} = j_{1:t}, J_{1:t}^{(\ell)}=(j_{1:t-1},j_t^*)\big]}{N-t+2} = 0.\]
    \item For the $O(|\bar{\eta}_i|)$ units that are members of $\mathcal{J}_i$, it holds that $\bar{\eta}_i \cap (j_t, j_t^*) \neq \emptyset$. For these members, $|\diff_{it}| \leq 4C$ by \ref{item: j_ell_t}. Taking expectations and summing over the members of $\mathcal{J}_i$ yields
    \[ \sum_{j_t \in \mathcal{J}_i} \frac{\E\big[\ |\diff_{it}|\ \big|J_{1:t} = j_{1:t}, J_{1:t}^{(\ell)}=(j_{1:t-1},j_t^*)\big]}{N - t + 2} \leq (|\bar{\eta}_i| + 1)\frac{4C}{N-t}\] 
    \item For the $O(N)$ units that are non-members of $\mathcal{J}_i$, it holds that $\bar{\eta}_i \cap (j_t, j_t^*) = \emptyset$. For these non-members, $|\diff_{it}| \leq 4Cd^{\max}/(N-t)$ by \ref{item: j_t}. Taking expectations and summing over the non-members of $\mathcal{J}_i$ yields
    \[ \sum_{j_t \notin \mathcal{J}_i} \frac{\E\big[\ |\diff_{it}|\ \big|J_{1:t} = j_{1:t}, J_{1:t}^{(\ell)}=(j_{1:t-1},j_t^*)\big]}{N - t + 2} \leq \frac{4NCd^{\max}}{(N-t)^2}.\] 
\end{enumerate}
\end{proof}

\section{Proof of Theorem \ref{th: coverage}} \label{sec: coverage proof}

To prove Theorem \ref{th: coverage}, we require Lemma \ref{le: V consistency} which gives conditions under which $V$ is consistent for $\operatorname{Var} \tau$. It is proven in the supplemental materials. 

\begin{lemma}\label{le: V consistency}
Let the conditions of Theorem \ref{th: coverage} hold. Then if $\operatorname{Var} \tau \geq N^{1/2 + \delta}$ for some $\delta > 0$, it holds that
\[ V = (\operatorname{Var} \tau)\cdot (1 + o_P(1)) \]
\end{lemma}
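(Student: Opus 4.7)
The conclusion $V = (\operatorname{Var}\tau)(1+o_P(1))$ will follow from two facts: $\E V = \Var\tau$, and $|V - \E V| = o_P(\Var\tau)$.

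Unbiasedness is almost immediate. Assumption \ref{as: tau}\ref{item: theta} gives $\theta_i^* = \tilde{f}_i(X_i)$, so the summand defining $V$ equals $\E[u_i u_j \theta_i^* \theta_j^* \mid X_i, X_j]$. Taking total expectation and summing yields $\E V = \E[\tau^2] = \Var \tau$, using $\E\tau = 0$ (which follows from \ref{item: zero mean} together with \ref{item: theta}).

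For concentration, I would apply the bounded-differences martingale inequality (Theorem \ref{th: martingale bounded diff}) to $V$ viewed as a function of the sequential draw $J$. As in Lemma \ref{le: diff}, using the coupling $J^{(\ell)}$ it suffices to bound the pointwise change in $V$ when one treated unit is swapped for another. I would split the summation into ``close'' pairs $\mathcal{P}= \{(i,j): \bar{\eta}_i \cap \bar{\eta}_j \neq \emptyset\}$ and the complementary distant pairs. For close pairs, $|\mathcal{P}| = O(N d_{\max}^2)$ and each summand is $O(C^2)$ by \ref{item: u bounded}, while a single swap affects only $O(d_{\max}^2)=O(1)$ close pairs, contributing $O(1)$ to the sensitivity. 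For distant pairs, the key claim is $|\E[u_i u_j \mid X_i, X_j]| = O(1/N)$: under sampling without replacement, additionally conditioning on $X_j$ perturbs the conditional distribution of $X_{\bar{\eta}_i \setminus \{i\}}$ by only $O(1/N)$ (by the same counting argument as Lemma \ref{le: basic prob}), so Assumption \ref{as: tau}\ref{item: zero mean} forces $\E[u_i \mid X_i, X_j] = O(1/N)$; a parallel hypergeometric argument handles the residual covariance between $u_i$ and $u_j$ when $\bar\eta_i$ and $\bar\eta_j$ are disjoint. Hence each distant summand is $O(1/N)$, and a single swap affects $O(N)$ such pairs each by $O(1/N)$, again contributing $O(1)$ to the sensitivity.

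With per-coordinate sensitivity $c_\ell = O(1)$ and $T \leq N$, Theorem \ref{th: martingale bounded diff} gives
\[ \mathbb{P}\bigl(|V - \E V| \geq \epsilon \Var \tau\bigr) \leq 2\exp\!\Bigl(-\Omega\bigl(\epsilon^2 (\Var \tau)^2/N\bigr)\Bigr), \]
which vanishes for every $\epsilon>0$ once $\Var \tau \geq N^{1/2+\delta}$, finishing the proof. The main obstacle is the $O(1/N)$ bound on $|\E[u_i u_j \mid X_i, X_j]|$ for distant pairs: the naive bound of $O(1)$ per summand would blow up the sensitivity to $O(N)$ and destroy the concentration. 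Establishing the $O(1/N)$ bound requires the hypergeometric accounting indicated above, exploiting both \ref{item: zero mean} and the small-covariance structure of sampling-without-replacement indicators, in the spirit of Lemma \ref{le: basic prob}.
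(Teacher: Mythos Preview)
Your proposal is correct and follows the same architecture as the paper: show $\E V = \Var\tau$ by the tower property, then concentrate $V$ via a bounded-differences inequality, with the crux being the $O(1/N)$ bound on $\E[u_iu_j\mid X_i,X_j]$ for pairs whose neighborhoods do not overlap (the paper isolates this as a separate lemma, proved by the hypergeometric/approximate-factorization argument you sketch).

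The one noteworthy difference is the device used to apply bounded differences. You work with the sequential draw $J$ and invoke the martingale version (Theorem~\ref{th: martingale bounded diff}) via the coupling $J^{(\ell)}$; the paper instead represents $X$ through $N$ i.i.d.\ Uniform$[0,1]$ variates $U_1,\ldots,U_N$ (treat the $T$ units with smallest $U$), so that perturbing a single $U_i$ flips at most two entries of $X$, and then applies the \emph{independent}-coordinate McDiarmid inequality directly. Both routes yield the same $O(1)$ per-coordinate sensitivity once the row-sum bound $\sum_j |\bar Q_{ij}| = O(1)$ is in hand, but the i.i.d.-uniform trick avoids the coupling machinery entirely and is a bit cleaner here because $V$ depends on $J$ only through the unordered treatment vector $X$.
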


\noindent We now prove Theorem \ref{th: coverage}.

\begin{proof}[Proof of Theorem \ref{th: coverage}]
Divide the sequence of experiments into two subsequences, depending on whether or not $\Var \tau \geq N^{1/2 + 0.01}$. We show that \eqref{eq: lower bound} covers $\psi$ for one subsequence, while \eqref{eq: backup optimization} covers $\psi$ for the other subsequence.

\begin{enumerate}
\item For the subsequence where $\Var \tau \geq N^{1/2 + 0.01}$, the conditions of Theorem \ref{th: clt}
are met, and hence $\tau / \sqrt{\operatorname{Var} \tau}$ is asymptotically normal. Since $V = (\operatorname{Var} \tau)\cdot (1 + o_P(1))$ by Lemma \ref{le: V consistency}, it follows that 
\begin{equation}\label{eq: coverage 1}
\frac{\tau}{\sqrt[\uproot{5}+]{V}} \rightarrow_d N(0,1).
\end{equation}
where $\sqrt[\uproot{5}+]{\phantom{.}}$ is given by \eqref{eq: sqrt pos}.
By \eqref{eq: coverage 1}, it follows that \eqref{eq: lower bound} covers $\psi$ with the desired asymptotic probability for this subsequence.
\item For the subsequence where $\Var \tau < N^{1/2 + 0.01}$, by Chebychev's inequality, it holds that 
\[ |\tau| = O_P\left(N^{1/4 + 0.005}\right) = o_P\left(N^{1/3}\right),\]
and hence with probability approaching 1, it holds that
\[ |u^T \theta^*| \leq N^{1/3},\]
in which case $\theta^*$ is in the feasible region of \eqref{eq: backup optimization}. This implies that \eqref{eq: backup optimization} covers $\psi$ with probability asymptotically approaching 1 for this subsequence.
\end{enumerate}
\noindent It follows that taking the minimum of \eqref{eq: lower bound} and \eqref{eq: backup optimization} covers $\psi$ for both subsequences with at least the desired asymptotic probability. This proves the theorem.
\end{proof}

\section{Proof of Theorem \ref{th: consistency}} \label{sec: consistency proof}

\begin{proof}[Proof of Theorem \ref{th: consistency}]

Let $U = (U_1,\ldots,U_N)$ denote a vector of $N$ i.i.d.\! Uniform[0,1] random variates. Then the treatment vector $X$ may be generated by treating the $T$ units with the smallest values in the vector $U$, so that
\[ X_i = \begin{cases} 1 & \textup{ if } U_i \leq U_{[T]} \\ 0 & \textup{ else, }\end{cases}\qquad i \in [N], \]
where $U_{[T]}$ denotes the $T$th empirical quantile of $\{U_i\}_{i=1}^N.$ It can be seen that changing a single entry of $U$ affects at most two entries of the vector $X$, as at most 1 unit goes from untreated to treated and vice versa.

Let $\varepsilon_0$ be given by
\begin{equation} \label{eq: vareps}
 \varepsilon_0 = \sum_{i=1}^N u_i(\theta_i^* - 0.5).
\end{equation}
By Assumption \ref{as: tau}, it holds that
\begin{enumerate}
\item $\mathbb{E}\varepsilon_0 = 0$, as $\E[u_i|X_i] = 0$ for all $i \in [N]$
\item Each entry of $X$ affects at most $d_{\max}+1$ terms in the summation appearing in \eqref{eq: vareps}. To see this, note that changing $X_i$ may affect $\theta_i^*$ and $u_j$ for all units $j$ for whom $i$ is a neighbor; this number is bounded by the out-degree of $G$.
\end{enumerate}

As a result, changing a single entry of $U$ affects at most $2$ entries in $X$, and hence at most $2(d_{\max}+1)$ members of $\{u_i(\theta_i^* - 0.5)\}_{i=1}^N$, which by Assumption \ref{as: tau} have magnitude bounded by $C/2$. It follows that changing a single entry of $U$ changes $\varepsilon_0$ by at most $C(d_{\max}+1)$. Thus Azuma-Hoeffding implies that
\[ P(|\varepsilon_0| \geq \epsilon) \leq \exp\left(\frac{-2\epsilon^2}{NC^2(d_{\max}+1)^2}\right).\]
implying that $|\varepsilon_0| = O_P(N^{1/2})$.

To prove the theorem, we observe that the error $\hat{L} - L$ satisfies
\begin{align}
 \nonumber |\hat{L} - L| & = \left| \frac{|u^T(Y - 0.5)|}{\|u\|_\infty} - \frac{|u^T(Y - \theta^*)|}{\|u\|_\infty} \right|\\
&  \leq \left| \frac{u^T(\theta^* - 0.5)}{\|u\|_\infty}\right| \label{eq: absolute val} \\
& = \frac{|\varepsilon_0|}{\|u\|_\infty} \label{eq: use vareps}\\
 & = \frac{O_P(N^{1/2})}{\|u\|_\infty} \label{eq: vareps is small} \\
& = O_P(N^{-1/2}) \label{eq: consistency final}
\end{align}
where the \eqref{eq: absolute val} holds because $(|a| - |b|)^2 \leq (a-b)^2$; \eqref{eq: use vareps} holds by definition of $\varepsilon_0$ in \eqref{eq: vareps}; \eqref{eq: vareps is small} holds as $|\varepsilon_0| = O_P(N^{-1/2})$ by the arguments given above; and \eqref{eq: consistency final} holds since \eqref{eq: consistency condition} implies $\frac{1}{\|u\|_\infty} \leq 1/C_{\min}$. This proves the theorem.
\end{proof}

\section{Proof of Theorem \ref{th: minimax}} \label{sec: proof minimax}

We will use $f$ to denote the collection of potential outcome mappings $(f_1, \ldots, f_N)$, so that we may write 
\[ Y = f(X) \]
Additionally, we will use $\theta^*(X;f) \equiv \theta^*$ to denote that $\theta^*$ is determined by $X$ and $f$. Using this notation, the target value $L$ may be written as
\begin{equation} \label{eq: proof minimax L}
 L = \frac{|u^T(f(X) - \theta^*(X;f)|}{\|u\|_\infty}.
\end{equation}
Let $\textup{Err}^*$ denote the minimax error, 
\[ \textup{Err}^* = \min_{\hat{L} \in \mathcal{L}} \max_{\{f_i\}_{i=1}^N \in \mathcal{F}} \mathbb{E}[(\hat{L} - L)^2]\]
which may be written using \eqref{eq: proof minimax L} as 
\[ \textup{Err}^* = \min_{\hat{L}} \quad \max_{f \in \mathcal{F}} \quad \mathbb{E}\left[ \left( \frac{|u^T(f(X) - \theta^*(X;f))|}{\|u\|_\infty} - \hat{L}(X)\right)^2\right] \]
where $\mathcal{F}$ denotes the space of possible outcome mappings $f$.

To lower bound the minimax error, we will construct mappings $f^1$ and $f^2$ in $\mathcal{F}$ as follows. First, let $f^* \in \mathcal{F}$ satisfy
\begin{align} \label{eq: f star}
f^*(X) & \in \argmax_{y \in \{0,1\}^N} |u^Ty|, & X \in \{0,1\}^N 
\end{align}
so that if $Y = f^*(X)$, then $Y$ maximizes $|u^TY|$. Let $\mathcal{E}$ denote the event that at least 2 units are treated,
\[ \mathcal{E} = \{X: \sum_{i=1}^N X_i \geq 2\} \]
which holds with probability 1 under Assumption \ref{as: random sampling}. Let $f^k$ for $k=1,2$ satisfy
\begin{align} \label{eq: f1 f2 in E}
f^k(X) & = f^*(X) \textup{ for all } X \in \mathcal{E} 
\end{align}
so that $f^1$ and $f^2$ return identical outcomes determined by $f^*$ whenever at least 2 units are treated. To describe the outcomes when $X \notin \mathcal{E}$ and only 0 or 1 units are treated, let $\theta^1$ and $\theta^2$ abbreviate $\theta^*(X; f^1)$ and $\theta^*(X, f^2)$ respectively, and let them satisfy
\begin{align} \label{eq: f1 f2 theta}
\theta^2 &= 1 - \theta^1
\end{align}
We note that \eqref{eq: f1 f2 in E} describes the outcomes under $f^1$ and $f^2$ for $X \in \mathcal{E}$, and hence places no constraints on $\theta^1$ and $\theta^2$, which describe the outcome under $f^1$ and $f^2$ under isolated treatment, in which case $X \notin \mathcal{E}$. As a result, the mapping $\theta^*(\ \cdot\ , f^1)$ can be any member of the set $\Theta$ given by 
\begin{equation}\label{eq: Theta set}
\Theta = \{ \theta^*(\ \cdot\ ,f): f \in \mathcal{F}\}
\end{equation}

Our proof will require the following identity, which is proven in the supplemental materials.
\begin{lemma} \label{le: minimax proof 1}
Given a random variable $x$ taking values in $\mathcal{X}$, let $\mathcal{H}$ denote the space of functions $\mathcal{X}\mapsto \mathbb{R}$. Given functions $g_1, g_2 \in \mathcal{H}$, it holds that
\begin{align}
\nonumber & \min_{h \in \mathcal{H}} \max\bigg\{ \ \mathbb{E}[(g_1(x) - h(x))^2],\  \mathbb{E}[(g_2(x) - h(x))^2] \ \bigg\} \\
 & \hskip.5cm {} = \mathbb{E}\left[\left(\frac{g_1(x) - g_2(x)}{2}\right)^2\right] \label{eq: minimax proof identity 1}
\end{align}
\end{lemma}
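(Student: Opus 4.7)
The plan is to prove the stated identity by showing both inequalities ($\leq$ and $\geq$). This is a clean minimax-equals-value argument where the minimizer is the ``midpoint'' function $h^*(x) = (g_1(x) + g_2(x))/2$.

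For the upper bound ($\leq$), I would exhibit the candidate $h^*(x) = (g_1(x) + g_2(x))/2$. Direct substitution gives $g_1(x) - h^*(x) = (g_1(x) - g_2(x))/2$ and $g_2(x) - h^*(x) = -(g_1(x) - g_2(x))/2$, so both expected squared errors are exactly $\mathbb{E}[((g_1(x) - g_2(x))/2)^2]$. Taking the max over these two equal quantities and then minimizing over $h \in \mathcal{H}$ shows the left-hand side is at most the right-hand side.

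For the lower bound ($\geq$), I would fix an arbitrary $h \in \mathcal{H}$ and use the pointwise parallelogram identity
\[ \frac{a^2 + b^2}{2} = \left(\frac{a+b}{2}\right)^2 + \left(\frac{a-b}{2}\right)^2 \geq \left(\frac{a-b}{2}\right)^2 \]
with $a = g_1(x) - h(x)$ and $b = g_2(x) - h(x)$, noting $a - b = g_1(x) - g_2(x)$. Taking expectations gives
\[ \frac{\mathbb{E}[(g_1(x) - h(x))^2] + \mathbb{E}[(g_2(x) - h(x))^2]}{2} \geq \mathbb{E}\left[\left(\frac{g_1(x) - g_2(x)}{2}\right)^2\right]. \]
Since the max of two nonnegative quantities is at least their average, the same bound applies to the max, which holds uniformly in $h$ and thus passes to the infimum over $h$.

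Combining the two bounds yields the claimed equality. There is no real obstacle here; the only thing worth flagging is that the argument works for $\mathcal{H}$ being the full space of measurable functions (the infimum is attained by $h^*$, so no approximation issue arises), and no integrability assumptions need to be made beyond those implicit in writing the expectations, since if either $g_1$ or $g_2$ fails to be square-integrable then both sides are infinite and the identity holds trivially.
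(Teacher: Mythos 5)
Your proof is correct and is essentially the paper's argument: both identify the midpoint $h^*=(g_1+g_2)/2$ as the optimal choice and obtain the matching lower bound by expanding squares so that the cross terms drop out --- your pointwise parallelogram identity plus the max-at-least-average step is the same algebra as the paper's inequality $\max\{\E[(a+b)^2],\,\E[(a-b)^2]\}=\E[a^2]+\E[b^2]+|2\E[ab]|\ge \E[a^2]$, applied after writing $h$ as the midpoint plus a remainder $R$. One immaterial nit: your closing aside that failure of square-integrability makes ``both sides infinite'' is not quite right (take $g_1=g_2$ non-square-integrable, in which case both sides equal $0$), but nothing is affected, since your two inequalities hold with all expectations valued in $[0,\infty]$.
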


\noindent We now prove Theorem \ref{th: minimax}.

\begin{proof}[Proof of Theorem \ref{th: minimax}] 

To prove the theorem, we first upper bound the error of the proposed estimator, then lower bound the minimax error, and then show that the bounds are equal.

As the proposed estimator $\hat{L}$ given by \eqref{eq: hat L} is equal to
\[ \hat{L} = \frac{|u^T(Y - 0.5)|}{\|u\|_\infty},\]
and the target value $L$ is given by
\[ L = \frac{|u^T(Y - \theta^*)|}{\|u\|_\infty},\]
it follows that the error satisfies
\begin{align*}
(\hat{L} - L)^2 &\leq \left(\frac{u^T(\theta^* - 0.5)}{\|u\|_\infty}\right)^2,
\end{align*}
where we have used the identity that $(|a| - |b|)^2 \leq (a - b)^2$, letting $a = u^T(Y-0.5)$ and $b = u^T(Y - \theta^*)$. Taking expectations yields the upper bound 
\begin{align}\label{eq: minimax proof 1}
\mathbb{E}[(\hat{L} - L)^2] \leq \mathbb{E}\left[ \left(\frac{u^T(\theta^* - 0.5)}{\|u\|_\infty}\right)^2\right]
\end{align}


We now lower bound $\textup{Err}^*$, the minimax error. To lighten notation we suppress $X$ in functions, writing $f$, $f^*$, and $\theta^*(f)$ respectively for $f(X)$,  $f^*(X)$, and $\theta^*(X;f)$. Additionally we use $\hat{L}$ to abbreviate $\hat{L}(X,Y)$. The minimax error can be lower bounded as follows:
\begin{align}
\label{eq: lower bound 1} \textup{Err}^* &\geq \min_{\hat{L} \in \mathcal{L}} \ \ \max_{f \in \{f^1,f^2\}} \mathbb{E}\left[ \left( \frac{|u^T(f - \theta^*(f))|}{\|u\|_\infty} - \hat{L}\right)^2\right]  \\
\label{eq: lower bound 2} &\geq \min_{\hat{L} \in \mathcal{L}} \ \ \max_{f \in \{f^1,f^2\}}  \mathbb{E}\left[ \left( \frac{|u^T(f - \theta^*(f))|}{\|u\|_\infty} - \hat{L}\right)^2 \cdot 1\{X \in \mathcal{E}\}\right] \\
\label{eq: lower bound 3} & = \min_{\hat{L} \in \mathcal{L}} \ \ \max_{k=1,2} \mathbb{E}\left[ \left( \frac{|u^T(f^* - \theta^*(f^k))|}{\|u\|_\infty} - \hat{L}\right)^2 \cdot 1\{X \in \mathcal{E}\}\right]  \\
\nonumber & = \min_{\hat{L} \in \mathcal{L}} \ \ \max \bigg\{ \mathbb{E}\left[ \left( \frac{|u^T(f^* - \theta^1)|}{\|u\|_\infty} - \hat{L}\right)^2 \cdot 1\{X \in \mathcal{E}\}\right],   \\
\label{eq: lower bound 4} & \hskip2cm {} \mathbb{E}\left[ \left( \frac{|u^T(f^* - (1 - \theta^1))|}{\|u\|_\infty} - \hat{L}\right)^2 \cdot 1\{X \in \mathcal{E}\}\right] \bigg\} \\
\label{eq: lower bound 5} & = \mathbb{E}\left[ \left( \frac{|u^T(f^* - \theta^1)|}{\|u\|_\infty}  - \frac{|u^T(f^* - (1 - \theta^1))|}{\|u\|_\infty} \right)^2 \cdot 1\{X \in \mathcal{E}\} \right] \\
\label{eq: lower bound 6} & = \mathbb{E}\left[ \left(\frac{|u^T(\theta^1 - 0.5)|}{\|u\|_\infty}\right)^2\cdot 1\{X \in \mathcal{E}\} \right] \\
\label{eq: lower bound 8}& = \mathbb{E}\left[ \left(\frac{|u^T(\theta^1 - 0.5)|}{\|u\|_\infty}\right)^2 \right] 
\end{align}
where \eqref{eq: lower bound 1} holds as $\{f^1,f^2\}$ is a subset of $\mathcal{F}$; \eqref{eq: lower bound 2} holds since we have multiplied the squared error by the indicator function, which cannot increase it; \eqref{eq: lower bound 3} holds since $f^1$ and $f^2$ both equal $f^*$ for $X \in \mathcal{E}$ by \eqref{eq: f1 f2 in E}; \eqref{eq: lower bound 4} holds by definition of $\theta^1$ and by \eqref{eq: f1 f2 theta}; \eqref{eq: lower bound 5} holds by Lemma \ref{le: minimax proof 1}; \eqref{eq: lower bound 6} holds because \eqref{eq: f star} implies that
\[ \operatorname{sign}(u^Tf^*) = \operatorname{sign}(u^Tf^* - u^T\theta^1) = \operatorname{sign}(u^Tf^* - u^T(1-\theta^1)),\]
so that
\[ |u^T(f^* - \theta^1)| - |u^T(f^* - (1- \theta^1))| = |u^T(1 - 2\theta^1)|,\]
and \eqref{eq: lower bound 8} holds as $P(X \notin \mathcal{E}) = 0$ under Assumption \ref{as: random sampling}.

We now compare the upper and lower bounds. \eqref{eq: minimax proof 1} upper bounds the estimation error of the proposed $\hat{L}$ by
\[ \mathbb{E}[(\hat{L} - L)^2] \leq \mathbb{E}\left[ \left(\frac{u^T(\theta^*(X; f) - 0.5)}{\|u\|_\infty}\right)^2\right] \]
where we have used $\theta^* = \theta^*(X; f)$. To bound the worst-case performance, we maximize both sides over all $f \in \mathcal{F}$, yielding
\begin{align} \label{eq: minimax proof end}
\max_{f \in \mathcal{F}} \, \mathbb{E}[(\hat{L} - L)^2] \leq \max_{f \in \mathcal{F}}\ \mathbb{E}\left[ \left(\frac{u^T(\theta^*(X; f) - 0.5)}{\|u\|_\infty}\right)^2\right]. 
\end{align}
On the other hand, \eqref{eq: lower bound 8} lower bounds the minimax error by
\[ \textup{Err}^* \geq \mathbb{E}\left[ \left(\frac{|u^T(\theta^*(X; f^1) - 0.5)|}{\|u\|_\infty}\right)^2 \right]  \]
where we have used $\theta^1 = \theta^*(X; f^1)$. Since the mapping  $\theta^*(\cdot, f^1)$ can be any member of the set $\Theta$ given by \eqref{eq: Theta set}, it follows that
\[ \textup{Err}^* \geq \max_{f \in \mathcal{F}} \mathbb{E}\left[ \left(\frac{|u^T(\theta^*(X;f) - 0.5)|}{\|u\|_\infty}\right)^2 \right]  \]
which combines with \eqref{eq: minimax proof end} to prove the claim.

\end{proof}




 

\section{Proof of Lemmas \ref{le: V consistency} and \ref{le: minimax proof 1}} \label{sec: proof minimax helper}

Here we prove Lemmas \ref{le: V consistency}, \ref{le: minimax proof 1}, and their required helper lemmas.

\subsection{Proof of Lemma \ref{le: V consistency}} \label{sec: V consistency}

Our proof of Lemma \ref{le: V consistency} will use the following tools. The Azuma-Hoeffding (or McDiarmid's) inequality \cite[Thm 6.2]{boucheron2013concentration} states that given independent variables $U = (U_1,\ldots,U_N)$, and a function satisfying the bounded difference property
\begin{equation} \label{eq: mcdiarmid condition}
 |f(U) - f(U')| \leq c_i \text{ if } U_j = U_j' \text{ for all } j \neq i,
\end{equation}
it holds that 
\begin{equation}\label{eq: mcdiarmid}
\mathbb{P}\left( | f(U) - \mathbb{E}f | > \epsilon \right) \leq 2\exp\left( - 2\epsilon^2/\sum_i c_i^2 \right).
\end{equation} 
To apply Azuma-Heoffding, we will require the following lemma which is proven in Section \ref{sec: proof Qij}.

\begin{lemma}\label{le: Qij}
Let the conditions of Theorem \ref{th: coverage} hold, and let $E_2$ denote the set of unit pairs $(i,j)$ such that $i$ and $j$ are either connected or have a common neighbor in $G$.  Then there exists constants $C_4$ and $N_0$ such that for all $(i,j) \notin E_2$, $N \geq N_0$, and all $X_i, X_j \in \{0,1\}$, it holds that
\begin{equation}\label{eq: le Qij claim}
\Big|  \mathbb{E}[u_iu_j|X_i,X_j] \Big| \leq C_4/N
\end{equation}

\end{lemma}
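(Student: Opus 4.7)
The strategy is to reduce the bound to a comparison of hypergeometric probabilities on bounded neighborhoods, where shifting the global treatment count by $O(1)$ changes local block probabilities by only $O(1/N)$. I would first verify that under $(i,j)\notin E_2$, the sets $A := \bar\eta_i \setminus\{i,j\}$ and $B := \bar\eta_j\setminus\{i,j\}$ are disjoint with $|A|,|B| \le d_{\max}$; by Assumption~\ref{as: tau}\ref{item: local}, $u_i$ is then a function of $(X_i, X_A)$ alone and $u_j$ a function of $(X_j, X_B)$ alone. I would then decompose
\[ \E[u_i u_j \mid X_i,X_j] = \E[u_i\mid X_i,X_j]\,\E[u_j\mid X_i,X_j] + \mathrm{Cov}(u_i, u_j \mid X_i,X_j), \]
and bound each term by $O(1/N)$.

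For the product term, the zero-mean condition \ref{item: zero mean} gives $\E[u_i\mid X_i]=0$, so
\[ P(X_j{=}0\mid X_i)\,\E[u_i\mid X_i,X_j{=}0] + P(X_j{=}1\mid X_i)\,\E[u_i\mid X_i,X_j{=}1] = 0. \]
Since $P(X_j{=}x\mid X_i)$ is bounded away from $0$ and $1$ under Assumption~\ref{as: random sampling}, it suffices to bound the difference of the two conditional means by $O(1/N)$. That difference equals $\sum_{x_A} u_i(X_i,x_A)\,[P(X_A{=}x_A\mid X_i,X_j{=}0) - P(X_A{=}x_A\mid X_i,X_j{=}1)]$; each conditional probability has the explicit hypergeometric form $\binom{N-2-|A|}{T-X_i-X_j-|x_A|}/\binom{N-2}{T-X_i-X_j}$, and a direct factorial calculation shows that their ratio equals $1+O(1/N)$ uniformly in $x_A$ whenever $|A|\le d_{\max}$ and $T/N\in[\rho,1-\rho]$. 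Summing over the $2^{|A|}$ patterns of $x_A$ (a constant in $N$) and using $|u_i|\le C$ then gives $|\E[u_i\mid X_i,X_j]|=O(1/N)$; the identical argument handles $u_j$, so the product term is $O(1/N^2)$.

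For the covariance, I would condition further on $X_B$. By \ref{item: local}, $g(X_B) := \E[u_i\mid X_i,X_j,X_B]$ depends on $X_B$ only through $s_B := \sum_{k\in B} X_k$, since conditional on the counts $X_i,X_j,s_B$ the distribution of $X_A$ on the disjoint set $A$ is again hypergeometric. The same binomial-ratio estimate then yields $|g(s_B)-g(s_B')|=O(1/N)$ uniformly over $s_B,s_B'\in\{0,\ldots,|B|\}$. Writing $\mu_i := \E[g(X_B)\mid X_i,X_j]$, we obtain
\[ |\mathrm{Cov}(u_i,u_j\mid X_i,X_j)| = \bigl|\E[u_j(g(X_B)-\mu_i)\mid X_i,X_j]\bigr| \le C\,\|g-\mu_i\|_\infty = O(1/N). \]
Combining the two bounds yields $|\E[u_iu_j\mid X_i,X_j]|\le C_4/N$ for a suitable $C_4$ and all $N\ge N_0$.

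The main obstacle is the explicit binomial-coefficient-ratio estimate underlying both steps. It is a routine factorial computation, but it requires careful bookkeeping, using $|A|,|B|,|x_A|,|x_B|\le d_{\max}$ and $T/N\in[\rho,1-\rho]$ to ensure that all relevant ratios of the form $\binom{M-a}{R-s}/\binom{M}{R}$ with $a,s$ bounded and $M,R$ of order $N$ are uniformly $1+O(1/N)$ as $N\to\infty$.
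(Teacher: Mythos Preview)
Your proposal is correct and the underlying engine is the same hypergeometric ratio estimate the paper uses, but the organization is genuinely different. The paper proves a single approximate-independence lemma showing
\[
\frac{P(X_{\eta_i}=g_1,\,X_{\eta_j}=g_2\mid X_i,X_j)}{P(X_{\eta_i}=g_1\mid X_i)\,P(X_{\eta_j}=g_2\mid X_j)} = 1 + r(g_1,g_2),\qquad |r|\le C_5/N,
\]
with the crucial point that the reference measure is the product of marginals conditioned on \emph{only} $X_i$ and \emph{only} $X_j$ respectively. Expanding $\E[u_iu_j\mid X_i,X_j]$ against this product then makes the leading term vanish \emph{exactly} as $\E[u_i\mid X_i]\,\E[u_j\mid X_j]=0$, leaving only the $O(1/N)$ remainder. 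Your route instead keeps the conditioning on $(X_i,X_j)$ throughout, so the product-of-means term does not vanish exactly; you must separately argue that $\E[u_i\mid X_i,X_j]=O(1/N)$ and then bound the conditional covariance by further conditioning on $X_B$. This is slightly longer (two invocations of the ratio estimate instead of one) but arguably more elementary, since it never compares to a product of measures with mismatched conditioning sets. Both arguments reduce to the same factorial bookkeeping you identify at the end, and your disjointness observation for $A$ and $B$ is exactly what $(i,j)\notin E_2$ provides.
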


We now prove Lemma \ref{le: V consistency}.

\begin{proof}[Proof of Lemma \ref{le: V consistency}]
We first show that $\E[V] = \Var \tau$:
\begin{align}
\nonumber \E[V] &= \sum_{i=1}^N \sum_{j=1}^N \E\Big[\theta_i^* \theta_j^* \E[u_i u_j|X_i, X_j]\Big] \\
\label{eq: expectation} & = \sum_{i=1}^N \sum_{j=1}^N \E\Big[ \E[ \theta_i^* \theta_j^* u_i u_j|X_i, X_j] \Big] \\
\label{eq: iterated expectation} & = \sum_{i=1}^N \sum_{j=1}^N \E\Big[ \theta_i^* \theta_j^* u_i u_j \Big] \\
\nonumber & =  \E\Big[ \Big(u^T \theta^* \Big)^2 \Big] \\
\label{eq: var is ex2} & = \operatorname{Var} \tau,
\end{align}
where \eqref{eq: expectation} holds because $\theta_i^*$ and $\theta_j^*$ are functions of $X_i$ and $X_j$, as $h(X_i, X_j)\cdot \mathbb{E}[g(X)|X_i, X_j] = \E[h(X_i, X_j)\cdot g(X)|X_i, X_j]$ for any functions $g, h$; \eqref{eq: iterated expectation} is the law of iterated expectations; and \eqref{eq: var is ex2} holds since $\tau = u^T\theta^*$ and $\E\tau = 0$.

We next show that $V$ converges to its expectation. Let $U = (U_1,\ldots,U_N)$ denote a vector of $N$ i.i.d.\! Uniform[0,1] random variates. Then the treatment vector $X$ may be generated by treating the $T$ units with the smallest values in the vector $U$, so that
\[ X_i = \begin{cases} 1 & \textup{ if } U_i \leq U_{[T]} \\ 0 & \textup{ else, }\end{cases}\qquad i \in [N], \]
where $U_{[T]}$ denotes the $T$th empirical quantile of $\{U_i\}_{i=1}^N.$ 

As $V$ is generated by $U$, a collection of $N$ independent random variables, we may apply Azuma-Hoeffding. To do so, we write $V$ as 
\[ V = \sum_{i=1}^N \sum_{j=1}^N \bar{Q}_{ij}(X_i,X_j)\]
for $\bar{Q}$ given by
\[ \bar{Q}_{ij}(X_i,X_j) = \theta_i^* \theta_j^* \mathbb{E}[u_i u_j| X_i,X_j]\] 
It can be seen that changing a single entry of $U$ affects at most two entries of the vector $X$, as at most 1 unit goes from untreated to treated and vice versa, and that each entry of $X$ affects 1 row and 1 column of the matrix $\bar{Q}$. To bound the contribution of each row of $\bar{Q}$ to $V$, we observe that 
\begin{align}
\nonumber \sum_{j=1}^N |\bar{Q}_{ij}| &= \sum_{j: (i,j) \in E_2} |\bar{Q}_{ij}| + \sum_{j: (i,j) \notin E_2} |\bar{Q}_{ij}| \\
\label{eq: apply le Qij} & \leq \sum_{j: (i,j) \in E_2} C^2 + \sum_{j: (i,j) \notin E_2} C_4/N \\
\label{eq: count entries} & \leq (d_{\max} + d_{\max}^2)C^2 + C_4
\end{align}
where \eqref{eq: apply le Qij} applies $|u_i| \leq C$ (from Assumption \ref{as: tau}) and $\theta_i^* \in \{0,1\}$ to bound $|\bar{Q}_{ij}|$ for $(i,j) \in E_2$, and applies Lemma \ref{le: Qij} to bound $|\bar{Q}_{ij}| \leq C_4/N$ for $(i,j) \notin E_2$; and \eqref{eq: count entries} uses that each unit has at most $d_{\max}$ neighbors and $d_{\max}^2$ units with whom it shares a common neighbor to bound  $|\{j: (i,j) \in E_2\}|$, and also uses the trivial bound $|\{j: (i,j) \notin E_2\}| \leq N$.

Using \eqref{eq: count entries}, we may apply Azuma Hoeffding with $c_i = (d_{\max} + d_{\max}^2)C^2 + C_4$ to show that
\[ P\left( |V - \E V| \geq \epsilon\right) \leq \exp\left( -\frac{2\epsilon^2}{N((d_{\max} + d_{\max}^2)C^2 + C_4)}\right)\]
which implies that $|V - \E V| = O_P(\sqrt{N}) = o_P(\operatorname{Var} \tau)$, where the last equality holds since the lemma assumes $\operatorname{Var} \tau > N^{1/2 + \delta}$ for $\delta > 0$. As $\E V = \operatorname{Var} \tau$ by \eqref{eq: var is ex2}, this proves the claim of the lemma.
\end{proof}

\subsection{Proof of Lemma \ref{le: minimax proof 1}}

\begin{proof}[Proof of Lemma \ref{le: minimax proof 1}]
We first show that given random variables $a$ and $b$, it holds that
\begin{align}
\max\Big\{\, \E[(a+b)^2], \, \E[(a-b)^2]\, \Big\} \geq \E[a^2] \label{eq: minimax helper 0}
\end{align}
To show \eqref{eq: minimax helper 0}, we observe that
\begin{align*}
& \max\Big\{\, \E[(a+b)^2], \, \E[(a-b)^2]\, \Big\} \\
& = \max\Big\{\, \E[a^2 + 2ab + b^2], \, \E[a^2 - 2ab + b^2]\, \Big\} \\
& = \E[a^2 + b^2] + \max\Big\{\, \E[2ab], \, -\E[2ab]\, \Big\} \\
& = \E[a^2] + \E[b^2] + \Big|\E[2ab]\Big| \\
& \geq \E[a^2]
\end{align*}
where the final inequality holds as $\E[b^2]$ and $|\E[2ab]|$ are both nonnegative.

We now show the result. Let $h(X) = (g_1(X) + g_2(X))/2 + R(X)$ and let $\delta(X) = (g_1(X) - g_2(X))/2$. It can then be shown that
\begin{align}
\nonumber &  \max\bigg\{ \ \mathbb{E}[(g_1(x) - h(x))^2],\  \mathbb{E}[(g_2(x) - h(x))^2] \ \bigg\} \\
\nonumber & = \max\bigg\{\ \mathbb{E}[(\delta(X) + R(X))^2], \  \mathbb{E}[(\delta(X) - R(X))^2]\ \bigg\} \\
& \geq \mathbb{E}[\delta(X)^2] \label{eq: minimax helper}
\end{align}
where \eqref{eq: minimax helper} follows from \eqref{eq: minimax helper 0}. As \eqref{eq: minimax helper} holds with equality if $R(X)=0$, the claim of the lemma follows.
\end{proof}

\subsection{Proof of Lemma \ref{le: Qij}} \label{sec: proof Qij}

Recall that $\eta_i$ denotes the neighborhood of unit $i$, as defined by \eqref{eq: eta gamma}. Given $X$ and $i \in [N]$, let $X_{\eta_i}$ denote the subvector of $X$ corresponding to the units in $\eta_i$. The proof of Lemma \ref{le: Qij} will require Lemma \ref{le: Qij helper}, which relates the joint and marginal distributions of $X_{\eta_i}$ and $X_{\eta_j}$ conditional on $X_i$ and $X_j$, when $\{X_i, \eta_i\}$ and $\{X_j, \eta_j\}$ are disjoint. It is proven in Section \ref{sec: proof Qij helper}.

\begin{lemma}\label{le: Qij helper}
Let the conditions of Lemma \ref{le: Qij} hold. Then there exists $C_5$ and $N_0$ such that if $N \geq N_0$ it holds for all $(i,j) \notin E_2$, all $X_i, X_j \in \{0,1\}$ and all $x_{\eta_i} \in \{0,1\}^{|\eta_i|}$ and $x_{\eta_j} \in \{0,1\}^{|\eta_j|}$ that

\begin{equation}
\left|\frac{P(X_{\eta_i} = x_{\eta_i}, X_{\eta_j} = x_{\eta_j}| X_i, X_j)}{P(X_{\eta_i} = x_{\eta_i}| X_i) P(X_{\eta_j} = x_{\eta_j}| X_j)} - 1 \right| \leq \frac{C_5}{N}
\end{equation}
\end{lemma}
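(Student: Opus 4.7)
The plan is to write out each of the four probabilities appearing in the ratio explicitly using the hypergeometric formula for sampling without replacement, substitute into the ratio, and show by direct algebraic manipulation that the result equals $1 + O(1/N)$ uniformly in the configuration.

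Since $(i,j) \notin E_2$, the sets $S_i := \{i\}\cup\eta_i$ and $S_j := \{j\}\cup\eta_j$ are disjoint and each has cardinality at most $d_{\max}+1$. Writing $a = |\eta_i|$, $b = |\eta_j|$, and $k_i = x_i + \sum_{u\in\eta_i} x_u$, $k_j = x_j + \sum_{u\in\eta_j} x_u$ for the numbers of treated units in $S_i$ and $S_j$ under the given configuration, sampling-without-replacement gives $P(X_A = x_A) = \binom{N - |A|}{T - k}/\binom{N}{T}$ for any $A \subseteq [N]$ with a configuration having $k$ treated units. Dividing by marginals yields hypergeometric expressions for the four probabilities in the ratio, and after substitution one obtains
\[
R \;=\; \frac{\binom{N-a-b-2}{T-k_i-k_j}\,\binom{N-1}{T-x_i}\,\binom{N-1}{T-x_j}}{\binom{N-2}{T-x_i-x_j}\,\binom{N-a-1}{T-k_i}\,\binom{N-b-1}{T-k_j}}.
\]

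Next, I would expand each binomial coefficient in falling factorials and cancel. The outcome is a product of $O(d_{\max})$ elementary factors, each of the form $(N-c)/(N-c')$, $(T-c)/(T-c')$, or $(N-T-c)/(N-T-c')$ with integer offsets $c, c'$ whose absolute values are bounded by a constant depending only on $d_{\max}$ (since $a,b \leq d_{\max}$ and $k_i,k_j \leq d_{\max}+1$). Assumption \ref{as: random sampling} gives $T \geq \rho N$ and $N-T \geq \rho N$, so for $N \geq N_0(\rho, d_{\max})$ the arguments $T - k_i - k_j$ and $N-T-(a+b+2-k_i-k_j)$ are strictly positive, every factor is well-defined, and every factor equals $1 + O(1/N)$ uniformly in the configuration. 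A bounded product of such factors equals $1 + O(1/N)$, proving the bound with some $C_5$ depending only on $\rho$ and $d_{\max}$.

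The main obstacle is purely combinatorial bookkeeping: carrying out the cancellation so each residual factor is exposed as $1 + O(1/N)$ with the error constant uniform over all configurations $(x_i, x_j, x_{\eta_i}, x_{\eta_j})$. Uniformity is not a genuine difficulty here because $a, b, k_i, k_j$ are deterministically bounded in terms of $d_{\max}$, so the offsets entering the falling factorials are themselves bounded by a constant depending only on $d_{\max}$. Consequently the choice of $N_0$ and $C_5$ depends only on $\rho$ and $d_{\max}$, as required.
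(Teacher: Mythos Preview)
Your proposal is correct and follows essentially the same approach as the paper: write the ratio via the hypergeometric counting formula (your expression for $R$ coincides with the paper's display), expand the binomial coefficients into falling factorials, and observe that the residual factors all have the form $1 + O(1/N)$ with offsets bounded in terms of $d_{\max}$ and denominators of order $N$ by Assumption~\ref{as: random sampling}. The paper makes the falling-factorial cancellation explicit by introducing the notation $P^a_b = a!/(a-b)!$ and grouping the result into six factors $(1+\epsilon_1)\cdots(1+\epsilon_6)$, each bounded via the inequality $|P^a_b/P^{a'}_b - 1| \leq (2^b-1)|a-a'|/(a'-b+1)$, but this is exactly the bookkeeping you describe.
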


\begin{proof}[Proof of Lemma \ref{le: Qij}]

As $u_i(X)$ depends on $X$ only through $X_i$ and $X_{\eta_i}$, given $x \in \{0,1\}$ and $g \in \{0,1\}^{|\eta_i|}$, let $u_i(x, g)$ denote the value of $u_i(X)$ when $X_i =x$ and $X_{\eta_i} = g$. 

Given $g_1\in \{0,1\}^{|\eta_i|}$ and $g_2 \in \{0,1\}$, let $r(g_1,g_2)$ satisfy
\begin{equation}\label{eq: r}
\frac{P(X_{\eta_i} = g_1, X_{\eta_j} = g_2| X_i, X_j)}{P(X_{\eta_i} = x_{\eta_i}| X_i) P(X_{\eta_j} = x_{\eta_j}| X_j)}  = 1 + r(g_1,g_2)
\end{equation}
For the $N_0$ described in Lemma \ref{le: Qij helper}, it holds for all $N \geq N_0$ that
\begin{align}
\nonumber & \Big| \mathbb{E}[u_i u_j|X_i, X_j] \Big| \\
 & = \left| \sum_{g_1, g_2} P(X_{\eta_i} = g_1, X_{\eta_j} = g_2|X_i, X_j) u_i(X_i,g_1)u_j(X_j,g_2)\right| \label{eq: Qij definition of E}\\
 & = \left| \sum_{g_1, g_2} P(X_{\eta_i} = g_1|X_i) P(X_{\eta_j} = g_2|X_j) u_i(X_i,g_1)u_j(X_j,g_2) (1 + r(g_1, g_2)) \right| \label{eq: Qij definition of r} \\
 & = \left| \sum_{g_1, g_2} P(X_{\eta_i} = g_1|X_i) P(X_{\eta_j} = g_2|X_j) u_i(X_i,g_1)u_j(X_j,g_2) \cdot r(g_1, g_2) \right| \label{eq: Qij u is mean zero}\\
 & \leq  \frac{C^2 C_5}{N} \label{eq: Qij u r are bounded},
\end{align}
Here \eqref{eq: Qij definition of E} follows by definition of the conditional expectation; \eqref{eq: Qij definition of r} follows by \eqref{eq: r}; \eqref{eq: Qij u is mean zero} follows as 
\begin{align*}
& \sum_{g_1, g_2} P(X_{\eta_i} = g_1|X_i) P(X_{\eta_j} = g_2|X_j) u_i(X_i,g_1)u_j(X_j,g_2) \\
& = \mathbb{E}[u_i|X_i] \mathbb{E}[u_j|X_j] \\
& = 0,
\end{align*}
since $\mathbb{E}[u_i|X_i] = 0$ for all $i \in [N]$ by Assumption \ref{as: tau}; and \eqref{eq: Qij u r are bounded} holds for $N \geq N_0$ because $|u_i| \leq C$ by Assumption \ref{as: tau clt} and $|r(g_1,g_2)| \geq C_5/N$ by Lemma \ref{le: Qij helper}. Letting $C_4 = C^2C_5$ proves the lemma.
\end{proof}

\subsection{Proof of Lemma \ref{le: Qij helper}} \label{sec: proof Qij helper}

\begin{proof}[Proof of Lemma \ref{le: Qij helper}]

Given units $(i,j)$, and given vectors $x_1 \in \{0,1\}^{|\eta_i|}$ and $x_2 \in \{0,1\}^{|\eta_j|}$, let $d_1 = |\eta_i|$ and $d_2 = |\eta_j|$ denote the respective dimensions of $x_1$ and $x_2$, and let $o_1$ and $o_2$ denote the sum of the entries in $x_1$ and $x_2$ respectively,
\begin{align*}
o_1 & = \sum_{k=1}^{d_1} [x_1]_k  & o_2 & = \sum_{k=1}^{d_2} [x_2]_k
\end{align*}
Given nonnegative integers $a,b$ such that $a \geq b$, let $P^a_b$ denote $\frac{a!}{(a-b)!}$. For $a,a',b$ such that $P^a_b$ and $P^{a'}_b$ are defined, we will use the identity that if  
\begin{align}\label{eq: P bound condition}
\left|\frac{a-a'}{a'-b+1}\right| \leq 1
\end{align}
 then it holds that\footnote{To see \eqref{eq: P bound}, observe that 
 \begin{align}
\frac{P^a_b}{P^{a'}_b} & = \frac{a(a-1)\cdots(a-b+1)}{a'(a'-1)\cdots(a'-b+1)} 
 = \prod_{k=0}^{b-1}\left(1 + \frac{a-a'}{a'-k}\right) 
 = \sum_{s \in \{0,1\}^b} \prod_{k=1}^b \left(\frac{a-a'}{a'-k+1}\right)^{s_k} \label{eq: binomial expansion} 
\end{align} 
Under \eqref{eq: P bound condition}, for all $s \neq 0$ it holds that 
\begin{align}
\left|\prod_{k=1}^b \left(\frac{a-a'}{a'-k+1}\right)^{s_k}\right| \leq  \left| \frac{a-a'}{a'-b+1}\right|, \label{eq: P condition implication}
\end{align}
and using \eqref{eq: P condition implication} to bound the terms appearing in the final equation of \eqref{eq: binomial expansion} implies \eqref{eq: P bound}.}
\begin{align} \label{eq: P bound}
\left| \frac{P^a_b}{P^{a'}_b} - 1\right| \leq (2^b-1) \left| \frac{a-a'}{a'-b+1}\right|.
\end{align}
To lighten notation, let $\tilde{a}$ and $\tilde{b}$ be defined as
\begin{align} \label{eq: a b definitions}
\tilde{a} & = N - 2 & \tilde{b} & = T - X_i - X_j
\end{align}
and given $x_1 \in \{0,1\}^{|\eta_i|}$ and $x_2 \in \{0,1\}^{|\eta_j|}$, let $\{\epsilon_i\}_{i=1}^6$ define terms that will appear in a later equation \eqref{eq: epsilons}, 
\begin{align}
\epsilon_1 & = \frac{P^{\tilde{a}-\tilde{b}}_{d_1 - o_1}}{P^{\tilde{a}+1-(\tilde{b}+X_j)}_{d_1-o_1}}-1, & 
\epsilon_2 & = \frac{P^{\tilde{a}-\tilde{b} - (d_1-o_1)}_{d_2 - o_2}}{P^{\tilde{a}+1-(\tilde{b}+X_i)}_{d_2-o_2}}-1, & 
\epsilon_3 & = \frac{P^{\tilde{a}+1}_{d_1}}{P^{\tilde{a}}_{d_1}} - 1 \label{eq: eps def 1}\\
\epsilon_4 & = \frac{P^{\tilde{a}+1}_{d_2}}{P^{\tilde{a}-d_1}_{d_2}} - 1, &
\epsilon_5 & = \frac{P^{\tilde{b}}_{o_1}}{P^{\tilde{b}+X_j}_{o_1}} -1,  &
\epsilon_6 & = \frac{P^{\tilde{b} - o_1}_{o_2}}{P^{\tilde{b}+X_i}_{o_2}} -1 \label{eq: eps def 2}
\end{align}
As $\tilde{a}$, $\tilde{b}$, and $\tilde{a} - \tilde{b}$ are all affine functions of $N$ under Assumption \ref{as: random sampling}, while $d_1, d_2, o_1$, and $o_2$ are bounded by constants under Assumption \ref{as: degree}, for large enough $N$ we may apply \eqref{eq: P bound} to the  definitions of $\epsilon_i$ given by \eqref{eq: eps def 1}-\eqref{eq: eps def 2}. For example, applying \eqref{eq: P bound} to $\epsilon_1$ as defined in \eqref{eq: eps def 1} results in the following bound for $|\epsilon_1|$,
\begin{align*}
|\epsilon_1| & \leq (2^{d_1-o_1}-1)\left|\frac{1-X_j}{\tilde{a}-\tilde{b}+1-X_j-(d_1-o_1)+1}\right| 
\end{align*}
which holds for $N$ large enough that
\[ \left|\frac{1 - X_j}{\tilde{a} - \tilde{b} + 1 - X_j - (d_1 - o_1) + 1}\right| \leq 1, \]
with $|\epsilon_2|$, $|\epsilon_3|$, $|\epsilon_4|$, $|\epsilon_5|$, and $|\epsilon_6|$ having similar bounds. In turn, these bounds imply for some $C_4', N_0'$ and all $N \geq N_0'$ that 
\begin{align} \label{eq: eps bound final}
 |\epsilon_i| &\leq \frac{C_4'}{N} & i=1,\ldots,6.
\end{align}
We now prove the result. For $(i,j) \notin E_2$ and any $x_1 \in \{0,1\}^{|\eta_i|}$ and $x_2 \in \{0,1\}^{|\eta_j|}$ it can be seen that
\begin{align}
\nonumber & \frac{P(X_{\eta_i} = x_1, X_{\eta_j} = x_2|X_i,X_j)}{P(X_{\eta_i} = x_1|X_i)P(X_{\eta_j} = x_2|X_j)}\\
\label{eq: counting arguments} & = \frac{{{N-2-d_1-d_2} \choose {T - X_i - X_j -o_1-o_2}}}{{{N-2} \choose {T - X_i - X_j}}} 
\cdot \frac{{{N-1} \choose {T - X_i}}}{{{N-1-d_1} \choose {T - X_i - o_1}}} 
\cdot \frac{{{N-1} \choose {T - X_j}}}{{{N-1-d_2} \choose {T - X_j - o_2}}} \\
\label{eq: a b} & = \frac{{{\tilde{a}-d_1-d_2} \choose {\tilde{b}-o_1-o_2}}}{{{\tilde{a}} \choose {\tilde{b}}}} 
\cdot \frac{{{\tilde{a}+1} \choose {\tilde{b} + X_j}}}{{{\tilde{a}+1-d_1} \choose {\tilde{b} + X_j - o_1}}} 
\cdot \frac{{{\tilde{a}+1} \choose {\tilde{b} + X_i}}}{{{\tilde{a}+1-d_2} \choose {\tilde{b} + X_i - o_2}}} \\
\label{eq: P identity} & = \frac{P^{\tilde{a}-\tilde{b}}_{d_1+d_2-o_1-o_2}  P^{\tilde{b}}_{o_1 + o_2}}{P^{\tilde{a}}_{d_1+d_2}} 
\cdot \frac{P^{\tilde{a}+1}_{d_1}}{P^{\tilde{a}+1-(\tilde{b}+X_j)}_{d_1-o_1}  P^{\tilde{b}+X_j}_{o_1}} 
\cdot \frac{P^{\tilde{a}+1}_{d_2}}{P^{\tilde{a}+1-(\tilde{b}+X_i)}_{d_2-o_2}  P^{\tilde{b}+X_i}_{o_2}} \\
\label{eq: reorder P} & = \frac{P^{\tilde{a}-\tilde{b}}_{d_1+d_2-o_1-o_2}}{P^{\tilde{a}+1-(\tilde{b}+X_j)}_{d_1-o_1}P^{\tilde{a}+1-(\tilde{b}+X_i)}_{d_2-o_2}}
\cdot\frac{P^{\tilde{b}}_{o_1 + o_2}}{P^{\tilde{b}+X_j}_{o_1}P^{\tilde{b}+X_i}_{o_2}}
\cdot\frac{P^{\tilde{a}+1}_{d_1}P^{\tilde{a}+1}_{d_2}}{P^{\tilde{a}}_{d_1+d_2}} 
\\
\label{eq: P split} & = \frac{P^{\tilde{a}-\tilde{b}}_{d_1 - o_1}}{P^{\tilde{a}+1-(\tilde{b}+X_j)}_{d_1-o_1} } 
\cdot \frac{P^{\tilde{a}-\tilde{b}-(d_1-o_1)}_{d_2-o_2}}{P^{\tilde{a}+1-(\tilde{b}+X_i)}_{d_2-o_2}} 
\cdot \frac{P^{\tilde{a}+1}_{d_1}}{P^{\tilde{a}}_{d_1}} 
\cdot \frac{P^{\tilde{a}+1}_{d_2}}{P^{\tilde{a}-d_1}_{d_2}}
\cdot \frac{P^{\tilde{b}}_{o_1}}{P^{\tilde{b}+X_j}_{o_1}} 
\cdot \frac{P^{\tilde{b} - o_1}_{o_2}}{P^{\tilde{b}+X_i}_{o_2}} \\
\label{eq: epsilons} & = (1+\epsilon_1)(1+\epsilon_2)(1+\epsilon_3)(1+\epsilon_4)(1+\epsilon_5)(1+\epsilon_6)
\end{align}
where \eqref{eq: counting arguments} holds by counting arguments\footnote{For example, 
\[ P(X_{\eta_i}=x_1,|X_i) = \frac{{{N-1 - d_i} \choose {T-X_i - o_i}}}{{{N-1} \choose {T-X_i}}},\]
where the numerator counts the number of treatment assignments that assign $x_1$ to $X_{\eta_i}$ and $X_i$ its observed value, effectively leaving $T-X_i-o_i$ units to be selected for treatment  (out of $N - 1 - d_i$ units total), and the denominator similarly counts the number of treatments assignments that assign $X_i$ and $X_j$ their observed values.}; \eqref{eq: a b} substitutes the definitions of $\tilde{a}$ and $\tilde{b}$ given by \eqref{eq: a b definitions}; \eqref{eq: P identity} uses the identity 
\[ \frac{{a-c \choose b-d}}{{{a} \choose {b}}} = \frac{ P^{a-b}_{c-d}\, \cdot\, P^b_d}{P^a_c}\] 
while \eqref{eq: reorder P} reorders terms; \eqref{eq: P split} uses the identity
$P^{a}_{b_1 + b_1} = P^{a}_{b_1} \cdot P^{a-b_1}_{b_2}$; and \eqref{eq: epsilons} holds by definition of $\{\epsilon_i\}_{i=1}^6$ in \eqref{eq: eps def 1}-\eqref{eq: eps def 2}.

Combining \eqref{eq: eps bound final} with \eqref{eq: epsilons} implies the lemma.
\end{proof}

\pagebreak 

 \bibliographystyle{apalike}
 {\footnotesize
 \bibliography{bibfile}

\begin{thebibliography}{}

\bibitem[Aronow, 2012]{aronow2012general}
Aronow, P.~M. (2012).
\newblock A general method for detecting interference between units in
  randomized experiments.
\newblock {\em Sociological Methods \& Research}, 41(1):3--16.

\bibitem[Aronow and Samii, 2017]{aronow2017estimating}
Aronow, P.~M. and Samii, C. (2017).
\newblock Estimating average causal effects under general interference, with
  application to a social network experiment.
\newblock {\em The Annals of Applied Statistics}, 11(4):1912--1947.

\bibitem[Athey et~al., 2018]{athey2017exact}
Athey, S., Eckles, D., and Imbens, G.~W. (2018).
\newblock Exact p-values for network interference.
\newblock {\em Journal of the American Statistical Association},
  113(521):230--240.

\bibitem[Basse et~al., 2019a]{basse2019randomization_peer}
Basse, G., Ding, P., Feller, A., and Toulis, P. (2019a).
\newblock Randomization tests for peer effects in group formation experiments.
\newblock {\em arXiv preprint arXiv:1904.02308}.

\bibitem[Basse et~al., 2019b]{basse2019randomization}
Basse, G., Feller, A., and Toulis, P. (2019b).
\newblock Randomization tests of causal effects under interference.
\newblock {\em Biometrika}, 106(2):487--494.

\bibitem[Billingsley, 2013]{billingsley2013convergence}
Billingsley, P. (2013).
\newblock {\em Convergence of probability measures}.
\newblock John Wiley \& Sons.

\bibitem[Boucheron et~al., 2013]{boucheron2013concentration}
Boucheron, S., Lugosi, G., and Massart, P. (2013).
\newblock {\em Concentration inequalities: A nonasymptotic theory of
  independence}.
\newblock Oxford university press.

\bibitem[Chatterjee, 2008]{chatterjee2008new}
Chatterjee, S. (2008).
\newblock A new method of normal approximation.
\newblock {\em Ann. Probab.}, 36(1):1584--1610.

\bibitem[Chen et~al., 2004]{chen2004normal}
Chen, L.~H., Shao, Q.-M., et~al. (2004).
\newblock Normal approximation under local dependence.
\newblock {\em The Annals of Probability}, 32(3):1985--2028.

\bibitem[Chin, 2018]{chin2018central}
Chin, A. (2018).
\newblock Central limit theorems via stein's method for randomized experiments
  under interference.
\newblock {\em arXiv preprint arXiv:1804.03105}.

\bibitem[Choi, 2021]{choi2021randomization}
Choi, D. (2021).
\newblock New estimands for experiments with strong interference.
\newblock {\em arXiv preprint arXiv:2107.00248}.

\bibitem[Cortez et~al., 2022]{cortez2022staggered}
Cortez, M., Eichhorn, M., and Yu, C. (2022).
\newblock Staggered rollout designs enable causal inference under interference
  without network knowledge.
\newblock In {\em Advances in Neural Information Processing Systems}.

\bibitem[Dubhashi and Panconesi, 2009]{dubhashi2009concentration}
Dubhashi, D.~P. and Panconesi, A. (2009).
\newblock {\em Concentration of measure for the analysis of randomized
  algorithms}.
\newblock Cambridge University Press.

\bibitem[Forastiere et~al., 2021]{forastiere2020identification}
Forastiere, L., Airoldi, E.~M., and Mealli, F. (2021).
\newblock Identification and estimation of treatment and interference effects
  in observational studies on networks.
\newblock {\em Journal of the American Statistical Association},
  116(534):901--918.

\bibitem[Han et~al., 2022]{han2022detecting}
Han, K., Li, S., Mao, J., and Wu, H. (2022).
\newblock Detecting interference in a/b testing with increasing allocation.
\newblock {\em arXiv preprint arXiv:2211.03262}.

\bibitem[Hu et~al., 2021]{hu2021average}
Hu, Y., Li, S., and Wager, S. (2021).
\newblock Average direct and indirect causal effects under interference.
\newblock {\em Biometrika}.

\bibitem[Leung, 2022]{leung2022rate}
Leung, M.~P. (2022).
\newblock Rate-optimal cluster-randomized designs for spatial interference.
\newblock {\em The Annals of Statistics}, 50(5):3064--3087.

\bibitem[Li and Wager, 2022a]{li2022network}
Li, S. and Wager, S. (2022a).
\newblock Network interference in micro-randomized trials.
\newblock {\em arXiv preprint arXiv:2202.05356}.

\bibitem[Li and Wager, 2022b]{li2022random}
Li, S. and Wager, S. (2022b).
\newblock Random graph asymptotics for treatment effect estimation under
  network interference.
\newblock {\em The Annals of Statistics}, 50(4):2334--2358.

\bibitem[Li et~al., 2019]{li2019randomization}
Li, X., Ding, P., Lin, Q., Yang, D., and Liu, J.~S. (2019).
\newblock Randomization inference for peer effects.
\newblock {\em Journal of the American Statistical Association}, pages 1--31.

\bibitem[Miguel and Kremer, 2004]{miguel2004worms}
Miguel, E. and Kremer, M. (2004).
\newblock Worms: identifying impacts on education and health in the presence of
  treatment externalities.
\newblock {\em Econometrica}, pages 159--217.

\bibitem[Ogburn et~al., 2022]{ogburn2022causal}
Ogburn, E.~L., Sofrygin, O., Diaz, I., and Van~der Laan, M.~J. (2022).
\newblock Causal inference for social network data.
\newblock {\em Journal of the American Statistical Association}, pages 1--15.

\bibitem[Park and Kang, 2023]{park2023assumption}
Park, C. and Kang, H. (2023).
\newblock Assumption-lean analysis of cluster randomized trials in infectious
  diseases for intent-to-treat effects and network effects.
\newblock {\em Journal of the American Statistical Association},
  118(542):1195--1206.

\bibitem[Pouget-Abadie et~al., 2019]{pouget2019testing}
Pouget-Abadie, J., Saint-Jacques, G., Saveski, M., Duan, W., Ghosh, S., Xu, Y.,
  and Airoldi, E.~M. (2019).
\newblock Testing for arbitrary interference on experimentation platforms.
\newblock {\em Biometrika}, 106(4):929--940.

\bibitem[Rinott, 1994]{rinott1994normal}
Rinott, Y. (1994).
\newblock On normal approximation rates for certain sums of dependent random
  variables.
\newblock {\em Journal of Computational and Applied Mathematics},
  55(2):135--143.

\bibitem[S{\"a}vje et~al., 2021]{savje2017average}
S{\"a}vje, F., Aronow, P.~M., and Hudgens, M.~G. (2021).
\newblock {Average treatment effects in the presence of unknown interference}.
\newblock {\em The Annals of Statistics}, 49(2):673 -- 701.

\bibitem[Tchetgen and VanderWeele, 2012]{tchetgen2012causal}
Tchetgen, E. J.~T. and VanderWeele, T.~J. (2012).
\newblock On causal inference in the presence of interference.
\newblock {\em Statistical methods in medical research}, 21(1):55--75.

\bibitem[Tchetgen~Tchetgen et~al., 2021]{tchetgen2021auto}
Tchetgen~Tchetgen, E.~J., Fulcher, I.~R., and Shpitser, I. (2021).
\newblock Auto-g-computation of causal effects on a network.
\newblock {\em Journal of the American Statistical Association},
  116(534):833--844.

\bibitem[Wang, 2021]{wang2021causal}
Wang, Y. (2021).
\newblock Causal inference under temporal and spatial interference.
\newblock {\em arXiv preprint arXiv:2106.15074}.

\end{thebibliography}
 }


\end{document}